\providecommand{\algorithmname}{Algorithm}
\newcommand{\manuallabel}[2]{\def\@currentlabel{#2}\label{#1}}
\pgfplotsset{compat=1.14}
\newtheorem{theorem}{Theorem}
\newtheorem{lemma}[theorem]{Lemma}
\newtheorem{dfn}[theorem]{Definition}
\newtheorem{remark}[theorem]{Remark}
\newtheorem{cor}[theorem]{Corollary}
\newtheorem{assump}[theorem]{Assumption}
\newcommand{\calI}{\mathcal{I}}
\newcommand{\calP}{\mathcal{P}}
\newcommand{\db}{d_{\rm B}}
\newcommand{\p}{\mathbb{P}}
\newcommand{\RR}{\mathbb{R}}
\newcommand{\EE}{\mathbb{E}}
\renewcommand{\epsilon}{\varepsilon}
\title{Statistical Mean Estimation with \\ Coded Relayed Observations}
\author{Yan Hao Ling, Zhouhao Yang, and Jonathan Scarlett\thanks{Y.H.~Ling is with the  Department of Computer Science, School of Computing, National University of Singapore (NUS). 

Z.~Yang is with the Department of Applied Mathematics and Statistics at Johns Hopkins University.

J.~Scarlett is with the Department of Computer Science, Department of Mathematics, and Institute of Data  Science, National university of Singapore.

This research is supported by the Singapore National Research Foundation under its Global AI Visiting Professorship program.

Emails: \url{lingyh@nus.edu.sg}; \url{zyang145@jh.edu}; \url{scarlett@comp.nus.edu.sg}}}
\begin{document}
\maketitle

\begin{abstract}
    We consider a problem of statistical mean estimation in which the samples are not observed directly, but are instead observed by a relay (``teacher'') that transmits information through a memoryless channel to the decoder (``student''), who then produces the final estimate.  We consider the minimax estimation error in the large deviations regime, and establish achievable error exponents that are tight in broad regimes of the estimation accuracy and channel quality.  In contrast, two natural baseline methods are shown to yield strictly suboptimal error exponents.  We initially focus on Bernoulli sources and binary symmetric channels, and then generalize to sub-Gaussian and heavy-tailed settings along with arbitrary discrete memoryless channels.
\end{abstract}

\section{Introduction}

In this paper, we are interested in the fundamental statistical problem of mean estimation, in which $n$ samples $X_1,\dotsc,X_n$ are drawn independently from some unknown distribution $P_X$, and we are interested in estimating $\theta^* := \EE[X]$.  The related work on mean estimation is extensive, and is briefly discussed in Section \ref{sec:related}.

The main distinction from standard mean estimation in this paper is that the samples are not observed directly by the estimator, but are instead observed by an intermediate agent and transmitted through a noisy communication channel.  The setup is summarized in Figure \ref{fig:setup}, and is detailed as follows:
\begin{itemize}
    \item There are two agents, which we call the \emph{teacher} and \emph{student} following the terminology of \cite{jog2020teaching} (alternatively, they may be referred to as the \emph{relay} and \emph{decoder}).
    \item The teacher sequentially observes $X_1,\dotsc,X_n$.
    \item At each time $t \in \{1,\dotsc,n\}$, the teacher sends information to the student through a single use of a discrete memoryless channel (DMC); the input at time $t$ is denoted by $W_t$, the output by $Z_t$, and the channel by $P_{Z|W}$.  Importantly, $W_t$ is only allowed to depend on the previous samples $X_1,\dotsc,X_{t-1}$.
    \item Based on $Z_1,\dotsc,Z_n$, the student forms an estimate $\hat{\theta}_n$ of $\theta^*$.
\end{itemize}
We will initially focus on the case that $X \sim {\rm Ber}(\theta)$ and $P_{Z|W}$ is a binary symmetric channel (BSC) with parameter $p \in \big(0,\frac{1}{2}\big)$, as this already captures the main ideas and difficulties.  We will then turn to more general sources and channels in the later sections.

In general, mean estimation can come with several different goals, such as mean squared error guarantees and small/moderate/large deviations bounds.  In this paper, we focus on the large deviations regime; our reasoning/motivation for this is discussed in Section \ref{sec:large_dev}.  Specifically, for some constant $\epsilon > 0$ (not depending on $n$), we seek to minimize
\begin{equation}
    \mathbb{P}(|\hat{\theta}_n - \theta^*| > \epsilon)
\end{equation}
and analyze how quickly this quantity decays as a function of $n$.  We will generally assume that $\epsilon$ and $P_{Z|W}$ are known throughout the system, though in our protocols only the student will use this knowledge; the teacher will not, except ``indirectly'' in the sense of knowing a good error-correcting code for the channel $P_{Z|W}$.
  
We say that a teaching and learning strategy achieves an error exponent $E$ for a family of distributions $\calP_X$ if
\begin{equation}
    \inf_{P_X \in \calP_X} \limsup_{n \rightarrow \infty} -\frac1n \log \mathbb{P}(|\hat{\theta}_n - \theta^*| > \epsilon) \ge E, \label{eq:E_def}
\end{equation}
and we define $E^* = E^*(\calP_X,P_{Z|W},\epsilon)$ as the highest possible error exponent among all protocols. 
Our goal is to obtain upper and lower bounds on the optimal error exponent $E^*$, for various families $\calP_X$ of interest.  We highlight that our focus is on information-theoretic limits rather than computation or ``practical'' performance, though we will note in Section \ref{sec:computation} that our protocols have polynomial runtime.  We also note that our focus is the scalar-valued case in which $X$ takes values in $\mathbb{R}$, but we will discuss extensions to $\RR^d$ in Section \ref{sec:vector}.

\begin{figure*}[!t]
    \centering
    \begin{tikzpicture}
\draw (3.25,1.25) node {$X$};
\draw (5.75,1.25) node {$W$};
\draw (8.25,1.25) node {$Z$};
\draw (10.75,1.25) node {$\hat{\theta}_n$};
\draw[->] (3,1) -- (3.5,1);
\draw[thick] (3.5,0.5) -- (3.5,1.5) -- (5.5,1.5) -- (5.5,0.5) -- (3.5,0.5);
\draw[->] (5.5,1) -- (6,1);
\draw[thick] (6,0.5) -- (6,1.5) -- (8,1.5) -- (8,0.5) -- (6,0.5);
\node at (4.5, 1) {Teacher};
\node at (7, 1) {$P_{Z|W}$};
\draw[->] (8,1) -- (8.5,1);
\draw[thick] (8.5,0.5) -- (8.5,1.5) -- (10.5,1.5) -- (10.5,0.5) -- (8.5,0.5);
\node at (9.5, 1) {Student};
\draw[->] (10.5,1) -- (11,1);
\end{tikzpicture}
    \caption{Illustration of our problem setup; the quantity being estimated is $\theta^* = \EE[X]$.}
    \label{fig:setup}
\end{figure*}
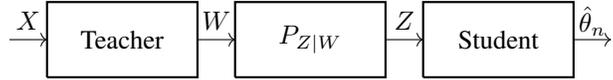

\subsection{Related work} \label{sec:related}

{\bf 1-bit and low-rate teaching/learning.} Our work is closely related to a recent line of works on teaching/learning a single bit of information \cite{jog2020teaching} and the essentially equivalent problem of error exponents for low-rate relaying over a tandem of channels \cite{onebit}.  The problem studied in \cite{jog2020teaching,onebit} resembles our problem specialized to Bernoulli+BSC, but the goal is to estimate a single bit observed by the teacher through another BSC, rather than estimating the Bernoulli distribution parameter.  Hence, a fundamental difference is the distinction between learning a discrete quantity (e.g., a single bit) vs.~learning a continuous quantity (e.g., a Bernoulli parameter).  We will find that some ideas such as block-structured teaching are common to both problems, but the details and analysis are largely distinct; see Section \ref{sec:baselines} for further discussion.

For the 1-bit teaching/learning problem, following the initial works in \cite{jog2020teaching,onebit}, the optimal error exponent was identified for BSCs in \cite{teachlearn}, and for general binary-input DMCs in \cite{ling2024optimal}.  Extensions to multi-bit and multi-hop settings were given in \cite{multibit,ling2023maxflow}.  It was noted in \cite{onebit} that the problem has connections to a ``many-hop'' (number of hops linear in the block length) problem called information velocity, which was subsequently studied in \cite{infovelocity,domanovitz2022information,domanovitz2024information}.

{\bf Mean estimation and large deviations theory.} Mean estimation is one of the most fundamental and widely-studied problems in statistics, and we do not attempt to summarize the literature in detail.  For distributions with light tails, a typical strategy is to estimate using the sample mean, and then apply concentration inequalities \cite{boucheron,vershynin} and/or large deviations analyses \cite{dembo2009large}.  For heavy-tailed distributions, the sample mean is typically a very poor estimate in the large deviations regime, and other estimators are adopted such as the trimmed mean or median-of-means \cite{lugosi2019mean}.  

Although the problem that we study is one of mean estimation, we will heavily rely on tools from binary hypothesis testing (e.g., see \cite{berlekampI,berlekampII,blahut2003hypothesis}), and our strategy will be to combine the results of several hypothesis tests to produce the final mean estimate.  In Remark \ref{rem:bernoulli}, we will discuss how this hypothesis testing approach can outperform approaches based on the sample mean, even in the setting of direct observations.

{\bf Communication-constrained statistical problems.} There have recently been a wide range of works studying estimation and other statistical problems under various kinds of communication constraints, e.g., see \cite{acharya2021distributed,duchi,ozgur,han2018geometric,salehkalaibar,gao2022review} and the references therein.  Certain classical works also fall in this category, e.g., \cite{ahlswede1986hypothesis}.  However, we are unaware of any such works that closely align with our problem setup; the works that we are aware of have significantly more differences to ours (compared to the ``closer'' 1-bit teaching/learning problems outlined above) and adopt substantially different methods.  For example, notable differences include considering other recovery criteria such as mean squared error \cite{duchi,ozgur}, considering noiseless bits instead of noisy channels \cite{acharya2021distributed,duchi,ozgur,ahlswede1986hypothesis}, considering distinct problems such as distribution learning and testing \cite{acharya2021distributed,salehkalaibar,ahlswede1986hypothesis,acharya2021interactive}, and considering problems where all samples are observed before coding \cite{ahlswede1986hypothesis}.

% these various works are generally incomparable to ours due to the different problem settings, but we highlight some more closely related works as follows. 
% {\bf \color{blue} [TODO: Maybe highlight benefits of interactivity \cite{acharya2021interactive} or multiple samples per user \cite{acharya2021distributed}.  And should we discuss connections of our setting to the distributed setting?]}
% {\color{blue}
% \begin{itemize}
%     \item \cite{acharya2021interactive}: Distribution learning and identity testing (not mean estimation).  Focuses more on $\epsilon$ dependence for constant error probability.  Focuses more on scaling laws in terms of alphabet size and $\epsilon$.  Interactivity is different (can depend on past outputs but not past inputs).
%     \item \cite{acharya2021distributed}: Distribution learning.  Focuses more on average minimax risk for TV distance.  Considers noiseless 1-bit quantization.
%     \item \cite{duchi}: Consider MSE; considers noiseless bits.
%    \item \cite{ahlswede1986hypothesis}: Non-causal hypothesis testing
% \end{itemize}
% }

\subsection{Discussion on large deviations and other regimes} \label{sec:large_dev}

In principle, our problem setup could be considered alongside several recovery criteria, including mean squared error, constant error probability, moderate deviations, or large deviations bounds.  We focus on the latter for two main reasons:
\begin{enumerate}
    \item In regimes other than large deviations, very simple strategies can be near-optimal in terms of the leading asymptotic terms (see below), and thus understanding the benefit of more complex protocols would require introducing higher-order asymptotic terms or moving to a non-asymptotic analysis.
    \item The large deviations regime closely aligns with the recent line of works on 1-bit teaching/learning and low-rate relaying \cite{onebit,jog2020teaching,teachlearn,multibit}, which itself has interesting connections to other problems such as information velocity \cite{onebit,infovelocity}.
\end{enumerate}
To elaborate on the first of these points, suppose that we consider a regime in which the deviation probability decays sub-exponentially in $n$, say with $\epsilon_n \to 0$ in a manner such that we can expect $\mathbb{P}(|\hat{\theta}_n - \theta^*| > \epsilon_n) \le e^{-\Theta(n^{0.99})}$.  Then fix $\delta \in (0,1)$ and consider a protocol in which the teacher estimates $\theta^*$ using the first $(1-\delta)n$ samples, and transmits a slightly quantized version of that estimate using the last $\delta n$ channel uses.  Since $\delta n$ is linear in $n$, we can use $n^C$ quantization levels (for any fixed $C > 0$) while still maintaining exponential decay in the error probability of transmitting the quantized estimate.  Thus, we attain essentially identical performance to directly estimating $\theta^*$ from $(1-\delta)n$ samples, which amounts to ``losing'' an arbitrarily small fraction of the samples.  As we will see in Lemma \ref{lem:est_forward} below and our numerical evaluations in Section \ref{sec:numerical}, the analogous loss of such an approach can become much more significant in the large deviations regime.

\subsection{Discussion of protocols} \label{sec:baselines}

In this subsection, we discuss a number of baselines and simple protocols, as well as briefly summarizing our own approach.  The corresponding exponents will be compared numerically in Section \ref{sec:numerical}.

Here and throughout the paper, it will be useful to write the results in terms of two error exponents associated with the source and the channel individually, defined as follows.

\begin{dfn}
    For a given class $\calP_X$ of source distributions and an accuracy parameter $\epsilon > 0$, we define the \emph{source exponent} (or \emph{direct observation exponent}) $E^{\rm src}_{\epsilon}(\calP_X)$ as the highest $E$ such that the error exponent $E$ is achieved (in the sense of \eqref{eq:E_def}) by some $\hat{\theta}_n$ formed \emph{directly} from the $n$ samples $X_1,\dotsc,X_n$.
\end{dfn}

\begin{dfn}
    For a discrete memoryless channel $P_{Z|W}$ and a number of messages $M$, we define the \emph{channel exponent} $E^{\rm chan}_M(P_{Z|W})$ as the highest achievable error exponent (in the standard sense \cite{berlekampI}; see Section \ref{sec:channel_coding}) for sending $M$ messages in $n$ channel uses.  Moreover, we define the zero-rate error exponent $E^{\rm chan}(P_{Z|W},0) = \lim_{M \to \infty} E^{\rm chan}_M(P_{Z|W})$ (see Lemma \ref{lem:zero_rate} below for a justification of this equality).
\end{dfn}

When the class of distributions and/or the channel are clear from the context, we adopt the shorthands $E^{\rm src}_{\epsilon}$, $E^{\rm chan}_M$, and $E^{\rm chan}(0)$.  We briefly note some expressions for these quantities in special cases of interest (with $D(a\|b) = a\log\frac{a}{b} + (1-a)\log\frac{1-a}{1-b}$ being the binary KL divergence):
\begin{itemize}
    \item For Bernoulli sources, we have  $E^{\rm src}_{\epsilon} = D\big( \frac{1}{2} \|  \frac{1}{2} + \epsilon \big)$ for $\epsilon \in \big(0,\frac{1}{2}\big)$ (see Section \ref{sec:bernoulli}).
    \item For Gaussian sources with mean in $[0,1]$ and variance $\sigma^2$, we have $E^{\rm src}_{\epsilon} = \frac{\epsilon^2}{2\sigma^2}$ (see Section \ref{sec:subgaussian}).
    \item For the BSC, we have $E^{\rm chan}_M = c_M \cdot (1/2) D(1/2 \| p)$ with $c_M \approx \frac{M}{M-1} \in [1/2,1]$, and $E^{\rm chan}(0) = (1/2) D(1/2 \| p)$ (see Lemma \ref{lem:cM} below, which also makes the ``$\approx$'' statement more precise).
\end{itemize}
We now proceed to outline various protocols and their associated error exponents.

\subsubsection{Non-causal setting} Recall that an important feature of our setting is that the $i$-th transmitted symbol $W_i$ can only depend on the previous samples $X_1,\dotsc,X_{i-1}$.  Nevertheless, it is useful to consider a hypothetical \emph{non-causal setting} where the teacher \emph{first} receives $X_1,\dotsc,X_n$ and \emph{then} transmits the entire sequence $W_1,\dotsc,W_n$.  In this setup, we have the following.

\begin{lemma} \label{lem:noncausal}
    In the non-causal setting, for any distribution class $\calP_X$ whose set of possible means is $\Theta^* = [0,1]$,\footnote{This assumption is trivial for the Bernoulli class, and we will discuss in Section \ref{sec:discussion} how it can be relaxed in other cases.} and any $\epsilon \in \big(0,\frac{1}{2}\big)$, we have the following:
    \begin{itemize}
        \item (Converse) Any non-causal protocol achieves error exponent at most $\min\{E^{\rm src}_{\epsilon},E^{\rm chan}_{\lceil 1/(2\epsilon) \rceil} \}$.
        \item (Achievability) For any $\delta \in (0,1)$, there exists a non-causal protocol achieving error exponent at least $\min\{E^{\rm src}_{(1-\delta)\epsilon},E^{\rm chan}_{\lceil 1/(2\delta\epsilon)\rceil} \}$.
    \end{itemize}
\end{lemma}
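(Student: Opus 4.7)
The plan is to treat the converse and achievability separately, since each reduces cleanly to a known primitive.

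For the converse, I would prove the two bounds in the minimum independently. To obtain $E \le E^{\rm src}_\epsilon$, I would use a simulation argument: a direct-observation estimator with access to $X_1,\dots,X_n$ can internally run the non-causal teacher's encoder, pass its outputs through a simulated copy of $P_{Z|W}$ to produce synthetic $Z_1,\dots,Z_n$, and then apply the non-causal student's decoder. This yields a direct estimator with identical error probability, so the non-causal exponent cannot exceed $E^{\rm src}_\epsilon$. For $E \le E^{\rm chan}_{\lceil 1/(2\epsilon)\rceil}$, I would reduce to channel coding: set $M = \lceil 1/(2\epsilon)\rceil$ and, using $\Theta^* = [0,1]$ together with $\epsilon \in (0,1/2)$, choose $M$ distributions $P_X^{(1)},\dots,P_X^{(M)} \in \calP_X$ whose means $\theta_1,\dots,\theta_M$ are pairwise separated by strictly more than $2\epsilon$ (a short counting check, treating separately the cases where $1/(2\epsilon)$ is or is not an integer, shows such a placement exists in $[0,1]$). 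To transmit message $j$, draw $n$ i.i.d.\ samples from $P_X^{(j)}$, run the teacher's encoder, apply the student's decoder at the receiver to obtain $\hat\theta_n$, and declare $\hat j = \argmin_{j'}|\hat\theta_n - \theta_{j'}|$. The strict separation forces any decoding error to imply $|\hat\theta_n - \theta_j| > \epsilon$, which has probability $\exp(-nE + o(n))$ by the definition of $E$; since $E^{\rm chan}_M$ is the supremum of achievable channel coding exponents, we conclude $E \le E^{\rm chan}_M$.

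For the achievability, I would build a simple ``estimate-quantize-transmit'' protocol. Set $M = \lceil 1/(2\delta\epsilon)\rceil$ and place quantization points $\theta^{(j)} = (2j-1)/(2M)$ for $j=1,\dots,M$, so that every point of $[0,1]$ lies within $1/(2M) \le \delta\epsilon$ of some $\theta^{(j)}$. The non-causal teacher observes $X_1,\dots,X_n$, runs an optimal direct estimator $\tilde\theta_n$ attaining exponent $E^{\rm src}_{(1-\delta)\epsilon}$ at accuracy $(1-\delta)\epsilon$ (clipped to $[0,1]$, which only tightens the estimate since $\theta^* \in [0,1]$), rounds $\tilde\theta_n$ to the nearest grid index $\tilde j$, and transmits $\tilde j$ using a channel code for $M$ messages attaining exponent $E^{\rm chan}_M$. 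The student decodes $\hat j$ and outputs $\hat\theta_n = \theta^{(\hat j)}$. On the event $\{|\tilde\theta_n - \theta^*| \le (1-\delta)\epsilon\} \cap \{\hat j = \tilde j\}$ the triangle inequality gives $|\hat\theta_n - \theta^*| \le \delta\epsilon + (1-\delta)\epsilon = \epsilon$, so a union bound yields
$\mathbb{P}(|\hat\theta_n - \theta^*| > \epsilon) \le \exp(-n E^{\rm src}_{(1-\delta)\epsilon} + o(n)) + \exp(-n E^{\rm chan}_M + o(n))$,
whose rate of decay is $\min\{E^{\rm src}_{(1-\delta)\epsilon}, E^{\rm chan}_M\}$, matching the claim.

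I do not anticipate a major technical obstacle; the proof is largely conceptual plumbing and the upper and lower bounds match up to the gap introduced by $\delta$. The points that will need care are purely combinatorial: matching the count $\lceil 1/(2\epsilon)\rceil$ in the converse requires the strict-vs-non-strict separation check mentioned above, and matching $\lceil 1/(2\delta\epsilon)\rceil$ in the achievability is the reason for placing quantization points at the interval centers $(2j-1)/(2M)$ rather than at endpoints. The only structural use of $\calP_X$ is through the hypothesis $\Theta^* = [0,1]$ in the converse; in the achievability, $\calP_X$ enters only via the existence of an estimator approaching $E^{\rm src}_{(1-\delta)\epsilon}$, which can be handled by the standard $\eta \to 0$ device if the supremum in the definition of the source exponent is not attained.
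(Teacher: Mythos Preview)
Your proposal is correct and follows essentially the same approach as the paper. The paper's converse also splits into the two terms, treating the $E^{\rm src}_\epsilon$ bound as immediate from the definition (your simulation argument simply unpacks this) and obtaining the $E^{\rm chan}_{\lceil 1/(2\epsilon)\rceil}$ bound by restricting $\theta^*$ to a finite $2\epsilon'$-separated grid (framed as a genie argument rather than your explicit reduction, but equivalent); the paper's achievability is likewise an estimate--quantize--transmit scheme with a uniform $\delta\epsilon$-covering grid of size $\lceil 1/(2\delta\epsilon)\rceil$, matching your construction up to the placement of grid points.
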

\begin{proof}
    See Appendix \ref{sec:noncausal}.
\end{proof}

Observe that in the limit as $\delta \to 0$, the achievable exponent approaches 
\begin{equation}
    \min\{ E^{\rm src}_{\epsilon},E^{\rm chan}(0) \}. \label{eq:delta0_exp}
\end{equation}
Thus, whenever the first term achieves this minimum, this protocol achieves the \emph{optimal direct-observation exponent} $E^{\rm src}_{\epsilon}$, i.e., the exponent that would be achieved if the student had direct access to $X_1,\dotsc,X_n$. 

\subsubsection{One-shot estimate-and-forward} Returning to the causal setting, a straightforward protocol is to perform the following for some parameters $\lambda,\delta \in (0,1)$:
\begin{itemize}
    \item Use the first $(1-\lambda)n$ samples to estimate $\theta^*$ to within accuracy $(1-\delta)n$;
    \item Use the last $\lambda n$ samples to transmit the estimate to within accuracy $\delta n$.
\end{itemize}
This simple protocol (which is described more formally in the proof of Lemma \ref{lem:est_forward}) can be analyzed in a similar manner to the non-causal setting, giving the following.

\begin{lemma} \label{lem:est_forward}
    For any distribution class $\calP_X$ whose set of possible means is $\Theta^* = [0,1]$, the one-shot estimate-and-forward protocol described above with parameters $(\lambda,\delta)$ attains an error exponent of $\min\{ (1-\lambda)E^{\rm src}_{(1-\delta)\epsilon}, \lambda  E^{\rm chan}_{\lceil 1/(2\delta\epsilon)\rceil} \}$. 
\end{lemma}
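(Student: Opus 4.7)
The plan is to decompose the student's error into an estimation-phase error and a transmission-phase error, bound each using the definitions of $E^{\rm src}$ and $E^{\rm chan}$, and combine via a union bound and the triangle inequality. Concretely, I would formalize the protocol as follows: during time steps $1,\dotsc,(1-\lambda)n$ the teacher observes $X_1,\dotsc,X_{(1-\lambda)n}$ and transmits arbitrary (dummy) channel inputs that the student ignores; at the end of this phase, it applies an estimator $\tilde{\theta}$ that attains the source exponent $E^{\rm src}_{(1-\delta)\epsilon}$ at accuracy $(1-\delta)\epsilon$ on $(1-\lambda)n$ samples. Projecting $\tilde\theta$ to $[0,1]$ if necessary, I then quantize to the center $\hat\theta_{\rm Q}$ of one of $M := \lceil 1/(2\delta\epsilon)\rceil$ sub-intervals of $[0,1]$ of length at most $2\delta\epsilon$, so that $|\tilde\theta - \hat\theta_{\rm Q}| \le \delta\epsilon$ deterministically. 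Over the remaining $\lambda n$ channel uses, the teacher transmits the index of $\hat\theta_{\rm Q}$ using a channel code for $M$ messages that attains exponent $E^{\rm chan}_M$, and the student decodes to obtain $\hat\theta_n := \hat\theta_{\rm Q}$.

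By the definition of the source exponent applied to the first $(1-\lambda)n$ samples, uniformly over $P_X \in \calP_X$ we obtain $\mathbb{P}\bigl(|\tilde\theta - \theta^*| > (1-\delta)\epsilon\bigr) \le e^{-(1-\lambda)n (E^{\rm src}_{(1-\delta)\epsilon} - o(1))}$, and by the definition of the channel exponent applied to a block length of $\lambda n$ with $M$ messages, the probability of a decoding error is at most $e^{-\lambda n (E^{\rm chan}_M - o(1))}$. On the intersection of the complementary ``good'' events, the triangle inequality yields
\[
|\hat\theta_n - \theta^*| \;\le\; |\hat\theta_{\rm Q} - \tilde\theta| + |\tilde\theta - \theta^*| \;\le\; \delta\epsilon + (1-\delta)\epsilon \;=\; \epsilon.
\]
A union bound over the two bad events gives
\[
\mathbb{P}\bigl(|\hat\theta_n - \theta^*| > \epsilon\bigr) \;\le\; \mathbb{P}\bigl(|\tilde\theta - \theta^*| > (1-\delta)\epsilon\bigr) \;+\; \mathbb{P}(\text{decoding error}),
\]
and since the exponent of a sum of two exponentially decaying terms is the minimum of the two individual exponents, this is exactly $\min\bigl\{(1-\lambda)E^{\rm src}_{(1-\delta)\epsilon},\, \lambda E^{\rm chan}_M\bigr\}$, as claimed.

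The proof is largely bookkeeping, but I would need to check three mild points. First, causality is preserved because $\tilde\theta$ (and hence the codeword transmitted during the second phase) is a function of $X_1,\dotsc,X_{(1-\lambda)n}$ only, so each $W_t$ depends only on $X_1,\dotsc,X_{t-1}$. Second, the infimum over $P_X \in \calP_X$ in the definition of $E^{\rm src}$ is preserved because the source-exponent bound applies distribution-by-distribution and we never combine across $P_X$. Third, the quantization relies on $\Theta^* = [0,1]$ both to justify projecting $\tilde\theta$ into $[0,1]$ (which can only reduce $|\tilde\theta - \theta^*|$) and to justify the count $\lceil 1/(2\delta\epsilon)\rceil$ of quantization levels; the relaxation to other $\Theta^*$ is precisely the discussion deferred to Section \ref{sec:discussion}.
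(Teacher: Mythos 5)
Your proposal is correct and follows essentially the same route as the paper: the paper proves this lemma by pointing back to the achievability argument for the non-causal setting (estimate with an optimal direct-observation procedure, quantize to a grid of $\lceil 1/(2\delta\epsilon)\rceil$ points each within $\delta\epsilon$ of every $\theta\in[0,1]$, transmit the index with a channel code, and combine the $\delta\epsilon + (1-\delta)\epsilon$ errors via the triangle inequality), with the block lengths reduced to $(1-\lambda)n$ and $\lambda n$ exactly as you do. Your write-up is if anything slightly more careful than the paper's, since you explicitly verify causality, the uniformity over $P_X$, and the union-bound/minimum-exponent step that the paper leaves implicit.
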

\begin{proof}
    See Appendix \ref{sec:oneshot}.
\end{proof}

We see that this approach suffers from shortening the ``effective block length'' in each term, thus multiplying the first term by $1-\lambda$ and the second term by $\lambda$.  In particular, this means that the exponent is strictly worse than the direct-observation exponent whenever $E^{\rm src}_{\epsilon} > 0$ and $E^{\rm chan}(0) < \infty$.  We note that a straightforward calculation reveals that for fixed $\delta$, the best choice in Lemma \ref{lem:est_forward} is $\lambda = \frac{ E^{\rm src}_{\lceil 1/(2\delta\epsilon)\rceil} }{E^{\rm src}_{(1-\delta)\epsilon} + E^{\rm chan}_{\lceil 1/(2\delta\epsilon)\rceil}}$, which equates the two terms in the $\min\{\cdot,\cdot\}$.

\subsubsection{Simple forwarding} In the case that the random variable $X$ has the same support as the channel input $W$, another natural strategy is \emph{simple forwarding}, in which the previous sample observed is transmitted directly, i.e., $W_i = X_{i-1}$.  In certain cases, the decoder can then estimate $\theta$ directly from the received sequence $Z_1,\dotsc,Z_n$; for concreteness, we demonstrate this idea in the specific case of a Bernoulli source and a BSC.

\begin{lemma} \label{lem:simple_forward}
    Consider the case of a Bernoulli source and a BSC with parameter $p \in \big(0,\frac{1}{2}\big)$, and fix $\epsilon \in \big(0,\frac{1}{2}\big)$.  When the teacher performs simple forwarding, there exists a design of the student such that an error exponent of $E^{\rm src}_{(1-2p)\epsilon}$ is achieved. 
\end{lemma}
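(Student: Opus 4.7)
The plan is to reduce the problem to a standard direct-observation Bernoulli mean estimation problem, but with a modified parameter range and a shrunk target accuracy. Under the simple forwarding rule $W_i = X_{i-1}$ (with $W_1$ set arbitrarily), the causal constraint is satisfied, and for each $i \ge 2$ the channel output $Z_i$ is, conditioned on $X_{i-1}$, Bernoulli with mean $p + (1-2p)X_{i-1}$. Marginalizing, $Z_2,\dotsc,Z_n$ are i.i.d.\ samples from ${\rm Ber}(\tilde{\theta})$ with
\[
\tilde{\theta} := p + (1-2p)\theta,
\]
and $Z_1$ is independent of all of these. Hence, up to losing a single sample (which has no effect on the exponent), the student sees $n - 1$ i.i.d.\ ${\rm Ber}(\tilde{\theta})$ observations.

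Next I would have the student form any estimator $\hat{\tilde{\theta}}_n$ of $\tilde{\theta}$ from $Z_2,\dotsc,Z_n$ that attains the optimal Bernoulli direct-observation exponent at accuracy $(1-2p)\epsilon$ (for instance, the hypothesis-testing–based estimator that is used to establish $E^{\rm src}_{\epsilon} = D(1/2 \| 1/2+\epsilon)$ for the Bernoulli class; the student knows $p$ and $\epsilon$, so this is implementable). Define
\[
\hat{\theta}_n := \frac{\hat{\tilde{\theta}}_n - p}{1-2p},
\]
clipped to $[0,1]$ if desired. Since the map $\theta \mapsto \tilde{\theta}$ is an affine bijection with slope $1-2p$, the events $\{|\hat{\theta}_n - \theta| > \epsilon\}$ and $\{|\hat{\tilde{\theta}}_n - \tilde{\theta}| > (1-2p)\epsilon\}$ coincide.

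Applying the direct-observation Bernoulli result gives
\[
\limsup_{n\to\infty} -\frac{1}{n}\log \mathbb{P}\big(|\hat{\tilde{\theta}}_n - \tilde{\theta}| > (1-2p)\epsilon\big) \ge E^{\rm src}_{(1-2p)\epsilon},
\]
and taking the infimum over $\theta \in [0,1]$ (equivalently $\tilde{\theta} \in [p,1-p] \subset [0,1]$) the worst case is pinned at $\theta = \tilde{\theta} = 1/2$, which is precisely where the Bernoulli direct-observation exponent is attained. Combined with the event equivalence above, this yields the claimed error exponent $E^{\rm src}_{(1-2p)\epsilon}$.

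The only mild subtlety, which I do not expect to be a real obstacle, is the careful bookkeeping in the reduction: checking that the causal shift by one sample costs nothing in the exponent, and that the exponent $E^{\rm src}_{(1-2p)\epsilon}$ as defined over all $\theta \in [0,1]$ is still the worst case even after restricting to $\tilde{\theta} \in [p,1-p]$. Both are handled by noting that the Bernoulli large-deviations rate $D(\tilde\theta \| \tilde\theta \pm (1-2p)\epsilon)$ is minimized at $\tilde\theta = 1/2$, which lies in $[p,1-p]$.
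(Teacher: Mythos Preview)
Your proposal is correct and follows essentially the same approach as the paper's proof: both observe that under simple forwarding the student sees $n-1$ i.i.d.\ ${\rm Ber}(\tilde\theta)$ samples with $\tilde\theta = p + (1-2p)\theta$, apply an optimal direct-observation estimator for $\tilde\theta$ at accuracy $(1-2p)\epsilon$, and invert the affine map to recover $\hat\theta$. Your additional discussion of the range restriction $\tilde\theta \in [p,1-p]$ is more careful than the paper's (which simply invokes the definition of $E^{\rm src}_{\epsilon}$), but note that since $[p,1-p]\subset[0,1]$ the minimax exponent over the restricted range can only be at least $E^{\rm src}_{(1-2p)\epsilon}$, so this subtlety works in your favor automatically.
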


A limitation of this protocol is that the randomness of the source and the noise from the channel get ``mixed together'', leading to an end-to-end distribution that is ``noisier'' than either of the two separately.  Since $E^{\rm src}_{\epsilon} = D\big( \frac12 \| \frac12 + \epsilon \big)$ is strictly increasing in $\epsilon$, we see that the exponent in Lemma \ref{lem:simple_forward} is is strictly worse than the direct-observation exponent $E^{\rm src}_{\epsilon}$ for all $p \in \big(0,\frac12\big)$.

% {\bf \color{red} [TODO: Should we discuss ``cumulative teaching''?  e.g., keep a running estimate, keep track of the number of 0s and 1s sent so far, and choose the next bit to move the empirical fraction closer to the current estimate.  Presumably this would have similar limitations to the 1-bit setup.]}

\subsubsection{Existing 1-bit teaching and learning protocol} In \cite{jog2020teaching,onebit,teachlearn}, the closely related problem of 1-bit 2-hop relaying was considered.  In their setup, there is only 1 bit (rather than a continuous parameter) to be estimated, but the teacher only receives observations of that bit passed through a BSC with a known crossover probability.

An asymptotically optimal strategy, proposed in \cite{teachlearn}, is to have the teacher read in blocks of size $k \ll n$, and transmit sequences of the form $1\dotsc10\dotsc0$ in size-$k$ blocks, where the number of 1s sent is chosen to be a carefully-designed function of the number of 1s received in the previous block.  One could conceivably consider a similar approach in our setup.  We do not make any claims on the degree of (sub)optimality of such a protocol, but we found it relatively difficult to analyze, and it was unclear to us what function should be used to decide the number of 1s sent.

\subsubsection{Our approach} We still use a block-structured strategy in a similar spirit as \cite{teachlearn} (see Section \ref{sec:bernoulli} for details), but within each block, we simply have the teacher send information about the sum or average of its received symbols using a \emph{general codebook} (whose codewords need not be of the form $1\dotsc10\dotsc0$).  We then consider a decoder that performs binary hypothesis tests on a number of $(\theta,\theta')$ pairs, constructs a set $S$ of $\theta$ values that are favored over all other values within distance $2\epsilon$, and outputs $\hat{\theta}_n$ as the midpoint of $S$.  See Section \ref{sec:bern_protocol} for the details.

\subsection{Summary of our results}

We now proceed to summarize our main results:
\begin{itemize}
    \item In Section \ref{sec:bernoulli}, we study the case of a Bernoulli source and a binary symmetric channel.  We provide a protocol that attains an error exponent of $\min\{ E^{\rm src}_{\epsilon}, E^{\rm chan}(0) \}$, and we show (via Lemma \ref{lem:noncausal}) that no protocol can attain an exponent better than $\min\{ E^{\rm src}_{\epsilon}, (1+2\epsilon+O(\epsilon^2)) E^{\rm chan}(0) \}$, where an exact expression is also available for the $\epsilon$-dependent factor, in particular equaling $\frac{1}{1-2\epsilon}$ under ``favorable rounding''.  Thus, we attain matching achievability and converse whenever the former $\min\{\cdot,\cdot\}$ is attained by the first term, and more generally a multiplicative gap of at most $1+2\epsilon+O(\epsilon^2)$.  In addition, the converse holds even for non-causal protocols.
    \item In Section \ref{sec:subgaussian}, we consider general sub-Gaussian sources and general discrete memoryless channels.  We provide a protocol that attains an error exponent of $\min\{ E^{\rm src}_{\epsilon}, E^{\rm chan}(0)\}$, and we know from Lemma \ref{lem:noncausal} that no protocol can attain an exponent better than $\min\{E^{\rm src}_{\epsilon},E^{\rm chan}_{\lceil 1/(2\epsilon) \rceil} \}$, where $E^{\rm chan}_{\lceil 1/(2\epsilon) \rceil} \ge E^{\rm chan}(0)$ but it holds that $E^{\rm chan}_{\lceil 1/(2\epsilon) \rceil} \to E^{\rm chan}(0)$ as $\epsilon \to 0$.  Thus, we attain matching achievability and converse whenever the former $\min\{\cdot,\cdot\}$ is attained by the first term, and even in the second term the multiplicative gap approaches 1 as $\epsilon$ decreases.  In addition, the converse holds even for non-causal protocols.
    \item In Section \ref{sec:discussion}, we extend the results of Section \ref{sec:subgaussian} beyond the sub-Gaussian case, in particular allowing general heavy-tailed distributions with a finite variance.  We also provide partial results for vector-valued sources, and we discuss the (polynomial-time) computational complexity of our protocols.
\end{itemize}

\section{Preliminaries} \label{sec:prelim}

In this section, we introduce some useful mathematical tools and results that will be used throughout the paper.

\subsection{Distance measures and distance-based codes}

For any two discrete distributions $P_1,P_2$ defined on the same set, we denote the Bhattacharyya coefficient by 
\begin{equation}
    \rho(P_1, P_2) = \sum_x \sqrt{P_1(x) P_2(x)},
\end{equation}
and the Bhattacharyya distance by 
\begin{equation}
    \db(P_1, P_2) = -\log \rho(P_1, P_2).
\end{equation}
For $\theta_1, \theta_2 \in [0,1]$, we will also adopt the shorthand
\begin{equation}
    \rho(\theta_1, \theta_2) = \rho({\rm Ber}(\theta_1), {\rm Ber}(\theta_2))
\end{equation}
to represent the Bhattacharyya coefficient between the corresponding Bernoulli distributions, and similarly for $\db(\theta_1,\theta_2)$ and $D(\theta_1 \| \theta_2)$.  The following simple identity will be useful.

\begin{lemma} \label{lem:eps_equiv}
    For any $\epsilon \in (0,1/2)$, it holds that 
    \begin{equation}
        \db(1/2-\epsilon,1/2+\epsilon) = D(1/2\|1/2+\epsilon) = D(1/2\|1/2-\epsilon) = -\frac{1}{2}\log(1-4\epsilon^2).
    \end{equation}
\end{lemma}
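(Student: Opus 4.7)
The plan is to verify all three equalities by direct computation; the statement is a short identity between standard binary information quantities, so no real obstacle is expected.

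First, I would compute the Bhattacharyya coefficient. By definition,
\begin{equation}
\rho(1/2-\epsilon,\,1/2+\epsilon) \;=\; \sqrt{(1/2-\epsilon)(1/2+\epsilon)} + \sqrt{(1/2+\epsilon)(1/2-\epsilon)} \;=\; 2\sqrt{1/4-\epsilon^2} \;=\; \sqrt{1-4\epsilon^2}.
\end{equation}
Taking $-\log$ of both sides immediately yields $\db(1/2-\epsilon,1/2+\epsilon) = -\tfrac{1}{2}\log(1-4\epsilon^2)$, matching the claimed right-hand side.

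Next, for the KL divergence, I would plug $a = 1/2$ into $D(a\|b) = a\log\frac{a}{b}+(1-a)\log\frac{1-a}{1-b}$ with $b = 1/2+\epsilon$:
\begin{equation}
D(1/2\|1/2+\epsilon) \;=\; \tfrac{1}{2}\log\frac{1/2}{1/2+\epsilon} + \tfrac{1}{2}\log\frac{1/2}{1/2-\epsilon} \;=\; \tfrac{1}{2}\log\frac{1/4}{1/4-\epsilon^2} \;=\; -\tfrac{1}{2}\log(1-4\epsilon^2).
\end{equation}
The equality $D(1/2\|1/2-\epsilon) = D(1/2\|1/2+\epsilon)$ then follows either by the same computation with $\epsilon$ replaced by $-\epsilon$ (the expression depends only on $\epsilon^2$), or more conceptually by the symmetry $D(a\|b) = D(1-a\|1-b)$ with $a=1/2$. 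Chaining the three computed values completes the proof.

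The only ``step'' that even merits attention is recognizing the cancellation $(1/2+\epsilon)(1/2-\epsilon) = 1/4-\epsilon^2$ that makes all three quantities collapse to the same closed form; beyond that, the lemma is purely a bookkeeping identity and is included only as a convenient reference for later sections.
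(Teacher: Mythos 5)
Your proof is correct and follows essentially the same route as the paper: a direct computation of $\rho(1/2-\epsilon,1/2+\epsilon)=2\sqrt{1/4-\epsilon^2}$ followed by taking $-\log$, an explicit expansion of $D(1/2\|1/2\pm\epsilon)$, and the symmetry $D(1/2\|p)=D(1/2\|1-p)$ for the remaining equality. No issues.
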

\begin{proof}
    For $\db$, a direct calculation gives $\rho(1/2-\epsilon,1/2+\epsilon) = 2\sqrt{(1/2-\epsilon)(1/2+\epsilon)} = 2\sqrt{1/4 - \epsilon^2}$, and taking the negative log and simplifying gives $-\frac{1}{2}\log(1-4\epsilon^2)$.  For the relative entropy, a direct calculation gives $D(1/2\|1/2+\epsilon) = \frac{1}{2}\log\frac{1/2}{1/2+\epsilon} + \frac{1}{2}\log\frac{1/2}{1/2-\epsilon}$, and we again get $-\frac{1}{2}\log(1-4\epsilon^2)$ by simplifying; note also that $D(1/2\|p) = D(1/2\|1-p)$.
\end{proof}

Consider a discrete memoryless channel $P_{Z|W}$, with input $W$ and output $Z$. Letting $\vec{W}, \vec{W}'$ be fixed strings of the same length over $W$, we adopt the shorthand
\begin{equation}
    \rho(\vec{W}, \vec{W}', P_{Z|W}) = \rho\big( \p(\cdot | \vec{W}), \p(\cdot | \vec{W}') \big)
\end{equation}
with $\p(\cdot | \vec{W})$ being the distribution of the output string $\vec{Z}$ when $\vec{W}$ is sent over the memoryless channel $P_{Z|W}$, and similarly for $\p(\cdot | \vec{W}')$.  Similarly, we write $\db(\vec{W}, \vec{W}', P_{Z|W}) = -\log \rho(\vec{W}, \vec{W}', P_{Z|W})$.
% We will also omit the channel $P_{Z|W}$ when it is clear from the context.

The following tensorization property of $\db$ is well known (e.g., see \cite{berlekampI}).

\begin{lemma} \label{lem:tens_db}
    For any two distributions $P_1,P_2$ of the form $P_1(\vec{x}) = \prod_{i=1}^k P_{1,i}(x_i)$ and $P_2(\vec{x}) = \prod_{i=1}^k P_{2,i}(x_i)$, it holds that 
    \begin{equation}
        \rho(P_1,P_2) = \prod_{i=1}^k \rho(P_{1,i},P_{2,i}), \quad \db(P_1,P_2) = \sum_{i=1}^k \db(P_{1,i},P_{2,i}).
    \end{equation}
    Consequently, given two codewords $\vec{W} = (W_1,\dotsc,W_k)$ and $\vec{W}' = (W'_1,\dotsc,W'_k)$ transmitted over a DMC $P_{Z|W}$, we have $\rho(\vec{W}, \vec{W}', P_{Z|W}) = \prod_{i=1}^k \rho(W_i, W'_i, P_{Z|W})$ and $\db(\vec{W}, \vec{W}', P_{Z|W}) = \sum_{i=1}^k \db(W_i, W'_i, P_{Z|W})$.
\end{lemma}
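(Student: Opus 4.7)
The plan is entirely direct, since the identity is essentially the statement that the sum of a product factorizes into a product of sums. First I would unfold the definition of $\rho$:
\begin{equation}
\rho(P_1,P_2) = \sum_{\vec{x}} \sqrt{P_1(\vec{x})\, P_2(\vec{x})} = \sum_{\vec{x}} \sqrt{\prod_{i=1}^k P_{1,i}(x_i)\, P_{2,i}(x_i)}.
\end{equation}
Because the square root commutes with a finite product, this becomes $\sum_{\vec{x}} \prod_{i=1}^k \sqrt{P_{1,i}(x_i)\, P_{2,i}(x_i)}$, and then the standard distributive identity $\sum_{x_1}\dotsb\sum_{x_k} \prod_i f_i(x_i) = \prod_i \sum_{x_i} f_i(x_i)$ yields $\prod_{i=1}^k \rho(P_{1,i},P_{2,i})$. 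Taking $-\log$ of both sides and using $\log\prod = \sum\log$ immediately gives the additive statement for $\db$.

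For the channel consequence, I would simply observe that by the memoryless assumption the conditional output distribution factorizes: $\p(\vec{z}\mid \vec{W}) = \prod_{i=1}^k P_{Z|W}(z_i \mid W_i)$, and likewise for $\vec{W}'$. Applying the just-established product formula to $P_{1,i} = P_{Z|W}(\cdot\mid W_i)$ and $P_{2,i} = P_{Z|W}(\cdot\mid W'_i)$ yields $\rho(\vec{W},\vec{W}',P_{Z|W}) = \prod_{i=1}^k \rho(W_i, W'_i, P_{Z|W})$, and taking $-\log$ gives the additive version for $\db$.

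There is no real obstacle here; the only ``care'' required is to be explicit that $\sqrt{\cdot}$ and $\prod$ commute (since all terms are nonnegative) and that the interchange of $\sum_{\vec{x}}$ with $\prod_i$ is valid because all sums are finite (the distributions are discrete). Both points are immediate, so the proof should be just a few lines.
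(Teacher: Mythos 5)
Your proof is correct: the interchange of the sum over $\vec{x}$ with the product over $i$ is the standard distributive argument, and the channel consequence follows exactly as you say from the memoryless factorization of $\p(\vec{z}\mid\vec{W})$. The paper itself gives no proof (it cites the result as well known from Shannon--Gallager--Berlekamp), and your few-line computation is precisely the standard argument one would supply.
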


The following lemma states an achievable minimum distance for low-rate codes; this is a standard result, but we provide a short proof for completeness.

\begin{lemma} \label{lem:bsc_hamming_distance}
    For any constant $C > 0$, there exists a length-$k$ codebook with $k^C$ binary codewords and minimum Hamming distance $k/2 - o(k)$ as $k \to \infty$.
\end{lemma}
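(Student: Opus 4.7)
The plan is to use a straightforward random coding argument with a union bound. I would construct the codebook by drawing each of the $M = \lceil k^C \rceil$ codewords independently and uniformly at random from $\{0,1\}^k$ (equivalently, each bit is an i.i.d.\ $\mathrm{Ber}(1/2)$).

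For any fixed pair of distinct codewords $\vec{W}, \vec{W}'$, their Hamming distance $d_H(\vec{W},\vec{W}')$ is a sum of $k$ i.i.d.\ $\mathrm{Ber}(1/2)$ indicators, so $\EE[d_H(\vec{W},\vec{W}')] = k/2$. By Hoeffding's inequality, for any $t > 0$,
\begin{equation}
    \p\big(d_H(\vec{W},\vec{W}') < k/2 - t\big) \le \exp\big(-2t^2/k\big).
\end{equation}
Since there are at most $\binom{M}{2} \le k^{2C}$ distinct pairs, the union bound gives
\begin{equation}
    \p\big(\text{minimum distance} < k/2 - t\big) \le k^{2C} \exp\big(-2t^2/k\big).
\end{equation}

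I would then choose $t = \sqrt{k \log k \cdot (C+1)}$ (or any $t$ of the order $\sqrt{k \log k}$ with a sufficiently large constant), so that the right-hand side tends to $0$ as $k \to \infty$. Hence for all sufficiently large $k$ there exists a realization of the random codebook with all pairwise Hamming distances at least $k/2 - O(\sqrt{k \log k})$. Since $\sqrt{k \log k} = o(k)$, this yields a deterministic length-$k$ codebook of size $\lceil k^C \rceil$ with minimum Hamming distance $k/2 - o(k)$, as required.

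There is no real obstacle here; the only thing to verify carefully is that the exponent $2t^2/k$ dominates the polynomial factor $k^{2C}$, which is immediate from the choice $t = \Theta(\sqrt{k \log k})$ with a large enough constant. No structural properties of the code (e.g., linearity) are needed for this statement, so the probabilistic argument suffices.
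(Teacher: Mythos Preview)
Your proposal is correct and follows essentially the same approach as the paper: random coding with i.i.d.\ $\mathrm{Ber}(1/2)$ bits, a Binomial/Hoeffding tail bound on each pairwise Hamming distance, and a union bound over the polynomially many pairs. The only cosmetic difference is that the paper parametrizes the deviation as $(\tfrac12 - c)k$ for an arbitrarily small constant $c$, whereas you make the quantitative choice $t = \Theta(\sqrt{k\log k})$ explicitly.
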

\begin{proof}
    Consider random coding, where each codeword is chosen uniformly at random from $\{0,1\}^k$. The Hamming distance between a pair of strings follows a Binomial$\big(k,\frac12\big)$ distribution. The probability of two specific codewords having Hamming distance at most $(\frac12-c)k$ is exponentially small in $c$. Since there are polynomially many codewords, a union bound over all pairs establishes the existence of a codebook with minimum distance $(\frac12-c)k$ when $k$ is large enough.  Since $c$ is arbitrarily small, the result follows.
\end{proof}

\subsection{Channel coding error exponents} \label{sec:channel_coding}

This subsection concerns error exponents for communication over discrete memoryless channels, not necessarily relating to mean estimation.  
For a fixed discrete memoryless channel $P_{Z|W}$ and integers $n$ and $M$, let $P_e^*(n,M)$ denote the smallest possible error probability for sending one of $M$ messages via $n$ uses of $P_{Z|W}$.  (For the purposes of error exponents, it is inconsequential whether this error probability is for the worst-case message or a uniformly random message.) 
For a given rate $R>0$, let
\begin{equation}
    E^{\rm chan}(R) = \lim_{n\rightarrow \infty} -\frac1n \log P_e^*(n, \exp(nR))
\end{equation}
be the associated optimal error exponent, and define $E^{\rm chan}(0)$ by continuity, i.e.
\begin{equation}
    E^{\rm chan}(0) = \lim_{R\rightarrow 0^+} E^{\rm chan}(R).
\end{equation}
Similarly, let
\begin{equation}
    E^{\rm chan}_M =  \lim_{n\rightarrow \infty} -\frac1n \log P_e^*(n, M)
\end{equation}
be the optimal error exponent for sending a constant number $M$ of messages.  Then, we have the following.

\begin{lemma} \label{lem:zero_rate}
    \emph{(\cite[Theorem 4]{berlekampI})}
    
    \begin{enumerate}
    \item For any discrete memoryless channel $P_{Z|W}$, we have 
    \begin{equation}
        E^{\rm chan}(0) = \max_{P_W} \sum_w \sum_{w'} P_W(w) P_W(w')\ \db(w,w',P_{Z|W}), \label{eq:e0_q}
    \end{equation}
    where $\max_{P_W}$ is taken over all probability distributions over the inputs of the channel
    \item It holds that
    \begin{equation}
        \lim_{M \rightarrow \infty} E^{\rm chan}_M = E^{\rm chan}(0).
        \label{eq:em_e0}
    \end{equation}    
    \end{enumerate}
\end{lemma}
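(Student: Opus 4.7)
The plan is to prove both parts together by characterising $\lim_{M\to\infty} E^{\rm chan}_M$ directly and then identifying it with $E^{\rm chan}(0)$. Denote $\bar d(P_W) := \sum_{w,w'} P_W(w)P_W(w')\,\db(w,w',P_{Z|W})$ and $\bar d^* := \max_{P_W}\bar d(P_W)$.

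First, I would establish achievability for any fixed $M$: fix a maximiser $P_W$ and draw $M$ codewords i.i.d.\ from $P_W^{\otimes n}$. By Lemma \ref{lem:tens_db}, the Bhattacharyya distance between any fixed pair of codewords is a sum of $n$ i.i.d.\ bounded random variables with mean $\bar d^*$, so Hoeffding's inequality combined with a union bound over the $\binom{M}{2}$ (constant-many) pairs yields a code whose minimum pairwise Bhattacharyya distance is at least $n(\bar d^*-\delta)$ with probability tending to $1$. The standard Bhattacharyya bound on the pairwise ML error probability then gives $P_e \le (M-1)\exp(-n(\bar d^*-\delta))$, and taking $n\to\infty$ followed by $\delta\to 0$ yields $E^{\rm chan}_M \ge \bar d^*$ for every constant $M$.

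Next, for the converse, I would lower-bound the error probability of any $M$-codeword code by the pairwise error between the closest pair. Using a standard reverse-Bhattacharyya (two-hypothesis testing) lower bound of the form $P_e \ge c\,\rho(\vec W_i,\vec W_j,P_{Z|W})$ for a universal $c>0$, the error exponent of any code is at most $\tfrac{1}{n}\min_{i\ne j}\db(\vec W_i,\vec W_j,P_{Z|W})$. The minimum is at most the average, and swapping the order of summation gives
\begin{equation}
    \frac{1}{\binom{M}{2}}\sum_{i<j}\db(\vec W_i,\vec W_j,P_{Z|W}) \;=\; \frac{M}{M-1}\sum_{t=1}^{n}\bar d(\hat P_{W,t}),
\end{equation}
where $\hat P_{W,t}$ is the empirical codeword-symbol distribution at time $t$; this is at most $\frac{M}{M-1}\,n\,\bar d^*$. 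Sandwiching gives $\bar d^* \le E^{\rm chan}_M \le \tfrac{M}{M-1}\bar d^*$, and letting $M\to\infty$ produces part 1's formula for the limit.

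Finally, to identify $\bar d^*$ with $E^{\rm chan}(0)$, I would use monotonicity: for any $R>0$ and fixed $M$, eventually $M < \exp(nR)$, so $P_e^*(n,M) \le P_e^*(n,\exp(nR))$, giving $E^{\rm chan}_M \ge E^{\rm chan}(R)$ and hence $\lim_{M\to\infty}E^{\rm chan}_M \ge E^{\rm chan}(0)$. For the opposite direction, I would extend the random-coding argument to slowly vanishing rates $R_n\to 0$: drawing $\exp(nR)$ codewords i.i.d.\ from $P_W^{\otimes n}$ and expurgating the vanishing fraction of pairs whose Bhattacharyya distance falls below $n(\bar d^*-\delta)$ still leaves $\exp(nR)(1-o(1))$ codewords, yielding $E^{\rm chan}(R)\ge\bar d^*-\delta$ for all $R$ small enough, so $E^{\rm chan}(0)\ge\bar d^*$. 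The main obstacle is the converse step: pinning down the clean two-hypothesis testing lower bound $P_e\ge c\,\rho^n$ (a Kailath/Berlekamp-style estimate, with the constants carefully tracked) and then upgrading the ``min-pairwise'' bound into one involving only the single-letter quantity $\max_{P_W}\bar d(P_W)$; the remaining pieces are routine concentration and union-bound calculations.
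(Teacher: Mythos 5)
This lemma is not proved in the paper at all --- it is quoted from Shannon--Gallager--Berlekamp --- so what matters is whether your reconstruction would actually go through. Your achievability direction (random coding from the maximizing $P_W$, Hoeffding plus a union bound over the constantly many pairs, then the pairwise Bhattacharyya union bound on the ML error) and your identification of $\lim_M E^{\rm chan}_M$ with $E^{\rm chan}(0)$ via monotonicity and expurgation at vanishing rates are both sound in outline (modulo the easy case where some $\db(w,w')=\infty$, which needs a separate but trivial treatment).

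The converse, however, rests on a false step: there is no universal $c>0$ with $P_e \ge c\,\rho(\vec W_i,\vec W_j,P_{Z|W})$ for a two-codeword test. The correct elementary lower bound is $P_{e,i}+P_{e,j} \ge \sum_{\vec z}\min\big(\p(\vec z|\vec W_i),\p(\vec z|\vec W_j)\big) \ge \frac{1}{2}\rho^2$, whose exponent is $2\db$, not $\db$; equivalently, the optimal pairwise exponent is the Chernoff information $-\min_{s}\mu_{ij}(s)$, which equals $\db$ only when the optimal tilt is $s=1/2$. A Z-channel with codewords $0^k$ and $1^k$ shows the factor of $2$ is real: there $\rho = q^{k/2}$ but the ML error is $q^k = \rho^2$. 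With only the $\rho^2$ bound, your closest-pair-plus-averaging argument yields $E^{\rm chan}_M \le \frac{2M}{M-1}\,\bar d^*$ and hence only $\lim_M E^{\rm chan}_M \le 2\bar d^*$, which does not sandwich. Closing this factor of $2$ is precisely the hard content of the cited theorem: one needs the tilted lower bounds of Shannon--Gallager--Berlekamp (both error probabilities bounded below in terms of $\mu_{ij}(s)$, $\mu'_{ij}(s)$, $\mu''_{ij}(s)$) together with a combinatorial selection over the $M$ codewords showing that some pair simultaneously has $\mu_{ij}(1/2)$ close to the average and a controllable derivative, so that the exponent $\db$ (rather than the Chernoff information) is forced. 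You correctly flag this as ``the main obstacle,'' but it is not a routine constant-tracking issue --- it is the theorem.
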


For comparing various bounds, it will be useful to write $E^{\rm chan}_M$ as a fraction of $E^{\rm chan}(0)$, and for this purpose we introduce the following definition and lemma.

\begin{dfn} \label{def:cM}
    For any DMC $P_{Z|W}$, we define $c_M = c_M(P_{Z|W})$ as the value such that $E^{\rm chan}_M = c_M E^{\rm chan}(0)$.
\end{dfn}

\begin{lemma} \label{lem:cM}
    ({\em \cite[Thm.~3 and Cor.~3.2]{berlekampII} and \cite[Eq.~(1.19)]{berlekampI}})
    The quantity $c_M$ satisfies the following:
    \begin{itemize}
        \item[(i)] For the BSC, we have
            \begin{equation}
             c_M = 
             \begin{cases}
                 \frac{M}{M-1} & M \text{ is even} \\
                 \frac{4 \cdot \lfloor M/2 \rfloor \cdot \lceil M/2 \rceil}{M(M-1)} & M \text{ is odd}.
             \end{cases} \label{eq:cM_bsc}
        \end{equation}
        \item[(ii)] For any DMC, we have the following as $M \to \infty$:
        \begin{equation}
            c_M \ge \frac{M}{M-1} - O\Big( \frac{1}{M^2} \Big), \quad c_M \le 1 + O\Big( \frac{1}{\sqrt{\log \log M}} \Big),
        \end{equation}
        and hence $\lim_{M \to \infty} c_M = 1$.
    \end{itemize}
\end{lemma}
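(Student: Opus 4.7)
The plan is to reduce both parts to the standard zero-rate characterization of $E^{\rm chan}_M$ in terms of pairwise Bhattacharyya distances: for any fixed $M$,
\[ E^{\rm chan}_M = \lim_{n\to\infty} \frac{1}{n}\, \max_{\calC} \min_{i \neq j} \db(\vec{W}_i, \vec{W}_j, P_{Z|W}), \]
where $\calC = \{\vec{W}_1,\dotsc,\vec{W}_M\}$ ranges over all length-$n$ codebooks of size $M$. This identity is the finite-$M$ analog of \eqref{eq:e0_q} and underlies Lemma \ref{lem:zero_rate}(i); see \cite[Thm.~3]{berlekampI}. With it in hand, both parts of the lemma become extremal problems about pairwise distances.

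For part (i), on the BSC with parameter $p$ a direct computation gives $\db(0,1,P_{Z|W}) = -\log(2\sqrt{p(1-p)}) = D(1/2\|p) = 2 E^{\rm chan}(0)$ (using the BSC expression for $E^{\rm chan}(0)$ noted just after Lemma \ref{lem:zero_rate}), and of course $\db(w,w,P_{Z|W}) = 0$. Tensorizing via Lemma \ref{lem:tens_db}, the pairwise Bhattacharyya distance between binary codewords factorizes as $d_H(\vec{W}_i,\vec{W}_j) \cdot 2 E^{\rm chan}(0)$, so the problem reduces to maximizing the minimum pairwise Hamming distance of a binary code of size $M$. Writing $\sum_{i<j} d_H(\vec{W}_i,\vec{W}_j) = \sum_{\ell=1}^n N_\ell(M - N_\ell)$ where $N_\ell$ is the number of codewords with a $1$ in coordinate $\ell$, Plotkin's bound together with the integer maximization of $N_\ell(M-N_\ell)$ gives exactly the two parity-dependent cases of \eqref{eq:cM_bsc} as upper bounds (since ``min $\le$ average''). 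The matching achievability follows from standard equidistant binary codes (e.g., Hadamard/simplex codes and augmented variants for odd $M$), which meet the Plotkin bound up to $o(n)$ slack.

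For part (ii), the same Bhattacharyya-distance reformulation is applied to a general DMC. The lower bound $c_M \ge \frac{M}{M-1} - O(1/M^2)$ is obtained by taking an input distribution $P_W^*$ achieving the maximum in \eqref{eq:e0_q} and building codebooks whose coordinate-wise empirical symbol type matches $P_W^*$; an averaging/expurgation argument then shows that the minimum pairwise Bhattacharyya distance falls short of $\frac{M}{M-1}$ times the average per-coordinate contribution $E^{\rm chan}(0)$ by only an $O(1/M^2)$ correction. The upper bound $c_M \le 1 + O(1/\sqrt{\log\log M})$ is the more delicate direction: one must show that no $M$-ary code can have minimum pairwise Bhattacharyya distance exceeding $E^{\rm chan}(0)$ by more than this slowly vanishing quantity, which is the content of \cite[Thm.~3, Cor.~3.2]{berlekampII} and rests on a refined continuity/type-quantization analysis of the functional appearing in \eqref{eq:e0_q}.

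The main obstacle is the $O(1/\sqrt{\log\log M})$ rate in part (ii): this slow decay is genuinely subtle and not accessible from elementary Plotkin-type reasoning, so the cleanest strategy is simply to invoke \cite[Thm.~3, Cor.~3.2]{berlekampII}. The BSC computation in part (i) and the $\frac{M}{M-1}$ lower bound in part (ii) can then be recovered essentially by the averaging arguments sketched above.
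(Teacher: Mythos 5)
The paper itself offers no proof of this lemma: it is imported verbatim from Shannon--Gallager--Berlekamp (\cite[Eq.~(1.19)]{berlekampI} for the BSC formula, \cite{berlekampII} for part (ii)), so your reconstruction goes beyond what the paper does, and in outline it is the right one --- reduce $E^{\rm chan}_M$ to the extremal problem $\max_{\calC}\min_{i\neq j}\db(\vec{W}_i,\vec{W}_j,P_{Z|W})$, solve the BSC case by the Plotkin averaging identity $\sum_{i<j}d_H(\vec{W}_i,\vec{W}_j)=\sum_\ell N_\ell(M-N_\ell)$ together with asymptotically equidistant codes (the parity split in \eqref{eq:cM_bsc} is exactly the integer maximization of $N(M-N)$), and import the $O(1/\sqrt{\log\log M})$ bound from the reference.

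Two steps in your sketch are weaker than you present them. First, the identity $E^{\rm chan}_M=\lim_n \frac1n\max_{\calC}\min_{i\ne j}\db(\vec{W}_i,\vec{W}_j,P_{Z|W})$ is not merely ``the finite-$M$ analog of \eqref{eq:e0_q}'': its achievability half is the Bhattacharyya/union bound, but the converse half requires the SGB two-codeword lower bound evaluated at $s=\tfrac12$, which shows that the \emph{larger} of the two conditional error probabilities for the closest pair is at least $\tfrac14\exp(-\db-O(\sqrt{n}))$; the elementary estimate $P_{e|1}+P_{e|2}\ge\tfrac12\rho^2$ would only cap the exponent at $2\db$ and lose a factor of two, so this converse cannot be waved through. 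Second, for the lower bound in part (ii), drawing codeword symbols i.i.d.\ from the maximizing $P_W^*$ makes the \emph{average} pairwise Bhattacharyya distance $(1+o(1))\,n\,E^{\rm chan}(0)$, not $n\frac{M}{M-1}E^{\rm chan}(0)$ (the diagonal terms $w=w'$ contribute zero to \eqref{eq:e0_q} but are excluded from the pairwise average over $i\neq j$, which is where the $\frac{M}{M-1}$ comes from); achieving that factor requires simplex-like negative correlation among codewords, as in the BSC construction, and a plain expurgation of an i.i.d.\ ensemble will not produce it. Neither point is fatal --- both are precisely what the cited theorems of \cite{berlekampI,berlekampII} supply --- but as written your sketch would not substitute for those citations.
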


The following corollary for low rates will also be useful; this serves as a natural counterpart to Lemma \ref{lem:bsc_hamming_distance}.

\begin{cor}
    For any DMC $P_{Z|W}$ and any $C > 0$, there exists a length-$k$ codebook over $P_{Z|W}$ containing $k^C$ codewords such that $\db(\vec{W},\vec{W}
    ',P_{Z|W}) \geq (k - o(k))E^{\rm chan}(0)$ for any pair of distinct codewords $\vec{W}$ and $\vec{W}'$.
    \label{cor:db_dmc}
\end{cor}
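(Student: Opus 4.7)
The plan is a random coding argument in the spirit of Lemma~\ref{lem:bsc_hamming_distance}, but with codewords drawn according to the input distribution that attains the zero-rate exponent in Lemma~\ref{lem:zero_rate}. Let $P_W^*$ be a maximizer in the characterization \eqref{eq:e0_q} of $E^{\rm chan}(0)$, and assume $E^{\rm chan}(0) < \infty$ (the case $E^{\rm chan}(0) = \infty$ is similar and even simpler, since then some $(w,w')$ with $\db(w,w',P_{Z|W}) = \infty$ has positive mass under $P_W^* \otimes P_W^*$, and with probability tending to one some coordinate of each random codeword pair will land on such an input pair). Because $E^{\rm chan}(0)$ is a finite weighted sum of $\db(w,w',P_{Z|W})$ values, every pair $(w,w')$ in the support of $P_W^* \otimes P_W^*$ must have finite $\db$; by finiteness of the input alphabet, this yields a uniform bound $D_{\max} < \infty$ on the values of $\db$ that can appear in the random ensemble.

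I would then draw $M = \lceil k^C \rceil$ codewords $\vec{W}^{(1)}, \dotsc, \vec{W}^{(M)}$ of length $k$ independently, with each symbol sampled i.i.d.\ from $P_W^*$. For a fixed pair $\vec{W}, \vec{W}'$ in this ensemble, the tensorization property in Lemma~\ref{lem:tens_db} gives
\begin{equation}
    \db(\vec{W}, \vec{W}', P_{Z|W}) = \sum_{i=1}^k \db(W_i, W_i', P_{Z|W}),
\end{equation}
a sum of i.i.d.\ nonnegative terms each bounded by $D_{\max}$ and each with mean $E^{\rm chan}(0)$ by \eqref{eq:e0_q}. Hoeffding's inequality then yields, for any $t_k > 0$,
\begin{equation}
    \p\big(\db(\vec{W}, \vec{W}', P_{Z|W}) < k E^{\rm chan}(0) - k t_k\big) \le \exp\!\big(-2 k t_k^2 / D_{\max}^2\big).
\end{equation}

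Finally, I would take a union bound over all $\binom{M}{2} \le k^{2C}$ unordered codeword pairs. Choosing $t_k = 1/\log k$ makes the individual deviation probability above of order $\exp(-\Omega(k/(\log k)^2))$, which dominates $k^{2C}$ for any fixed $C$. Hence with probability tending to one as $k \to \infty$, every pair in the random ensemble satisfies $\db(\vec{W}, \vec{W}', P_{Z|W}) \ge k E^{\rm chan}(0) - k/\log k = (k - o(k)) E^{\rm chan}(0)$, and the existence of the claimed codebook follows by the probabilistic method.

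I expect the only subtle step to be the reduction to bounded summands at the outset; once this is in place, the concentration and union-bound calculations are standard. In particular, the finiteness of $D_{\max}$ on the support of $P_W^* \otimes P_W^*$ is what allows a direct Hoeffding bound rather than a more delicate truncation argument of the form $\min\{\db, B_k\}$ with $B_k \to \infty$.
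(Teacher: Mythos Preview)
Your proposal is correct and follows essentially the same approach as the paper: random coding with the optimal input distribution from \eqref{eq:e0_q}, tensorization of $\db$, Hoeffding's inequality on the i.i.d.\ per-coordinate terms, a union bound over the polynomially many codeword pairs, and a separate treatment of the case $E^{\rm chan}(0)=\infty$. If anything, your version is slightly more careful in explicitly justifying the uniform bound $D_{\max}$ on the summands (which is what makes Hoeffding applicable), whereas the paper only asserts finiteness of the terms.
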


\begin{proof} 
    The proof is analogous to that of Lemma \ref{lem:bsc_hamming_distance} but with $\db$ replacing the Hamming distance; see Appendix \ref{sec:pf_db_dmc} for the details.
\end{proof}

\begin{remark}
    We allow both the cases $E^{\rm chan}(0) < \infty$ and $E^{\rm chan}(0) = \infty$, but we note that in the latter case the optimal error exponent for mean estimation can be attained by simply using one-shot estimate-and-forward (Lemma \ref{lem:est_forward}) and taking $\lambda \to 0$ and $\delta \to 0$.
\end{remark}

\section{Bernoulli Source and Binary Symmetric Channel} \label{sec:bernoulli}

In this section, we prove our first main result, stated as follows.

\begin{theorem} \label{thm:bernoulli}
    {\em (Bernoulli+BSC Achievability)}
    Under a Bernoulli source, a BSC with parameter $p \in \big(0,\frac{1}{2}\big)$, and an accuracy parameter $\epsilon \in \big(0,\frac{1}{2}\big)$, the following error exponent is achievable:
    \begin{equation}
        E^* \geq \min\Bigg(D\bigg(\frac12 \Big\| \frac12 + \epsilon\bigg), \frac12 D\big(1/2 \big\| p\big)\Bigg). \label{eq:bern_ach}
    \end{equation}
\end{theorem}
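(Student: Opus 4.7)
The plan is to combine a block-structured encoder with a decoder based on many pairwise hypothesis tests, and to analyze each pairwise test via a Bhattacharyya coefficient computation that cleanly separates the source and channel contributions. Partition the $n$ time slots into $B=\lfloor n/k\rfloor$ blocks of length $k$, taking $k\to\infty$ with $k=o(n)$ (e.g., $k=\lceil\log n\rceil$), and let $S_b=\sum_{i\in\text{block }b}X_i$. By Corollary \ref{cor:db_dmc}, fix a codebook $\{\vec{C}_0,\dotsc,\vec{C}_k\}$ of $k+1$ length-$k$ binary codewords with pairwise output Bhattacharyya distance at least $(k-o(k))\cdot\tfrac12 D(\tfrac12\|p)$; during block $b+1$ the teacher transmits $\vec{C}_{S_b}$, which is causal since $S_b$ is a function of block $b$'s samples only (block 1 uses a dummy codeword). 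The student fixes a grid $\Theta\subseteq[0,1]$ of spacing $\delta$ (to be taken arbitrarily small with $n$) and, for every pair $(\theta,\theta')\in\Theta^2$ with $|\theta-\theta'|\ge 2\epsilon-\delta$, runs the Bhattacharyya-optimal binary hypothesis test on $\vec{Z}=(Z_1,\dotsc,Z_n)$; it then forms the set $S$ of grid points that win every test they participate in and outputs $\hat\theta_n=\tfrac12(\min S+\max S)$.

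Let $\theta^*_\Theta\in\Theta$ be the grid point closest to $\theta^*$, so $|\theta^*-\theta^*_\Theta|\le\delta/2$. If every test involving $\theta^*_\Theta$ succeeds (favoring $\theta^*_\Theta$), then $\theta^*_\Theta\in S\subseteq(\theta^*_\Theta-(2\epsilon-\delta),\theta^*_\Theta+(2\epsilon-\delta))$, and combining $\min S\le\theta^*_\Theta\le\max S$ with $\max S-\min S<4\epsilon-2\delta$ forces $\hat\theta_n$ to lie within $\epsilon-\delta/2$ of $\theta^*_\Theta$, hence within $\epsilon$ of $\theta^*$. Since $|\Theta|=O(1/\delta)$ is polynomial, a union bound over the tests involving $\theta^*_\Theta$ preserves the exponent, so it suffices to control a single pairwise test between $\theta=\theta^*_\Theta$ and some $\theta'\in\Theta$ with $|\theta-\theta'|\ge 2\epsilon-\delta$.

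For such a pair, the Bhattacharyya bound gives $P_e\le\rho(\vec{Z}|\theta,\vec{Z}|\theta')$, which factorizes across blocks by Lemma \ref{lem:tens_db} since the codewords in different blocks are functions of disjoint sample subsets. Within a single block, $\vec{Z}_b$ is a mixture over $S_{b-1}$, with components $P_{Z|W}^{\otimes k}(\cdot|\vec{C}_s)$ and weights $a_s=\p_\theta(S_{b-1}=s)$ under $\theta$ (respectively $b_s$ under $\theta'$). Applying $\sqrt{\sum_i x_i}\le\sum_i\sqrt{x_i}$ inside $\sqrt{\p_\theta(\vec z)\p_{\theta'}(\vec z)}$ and summing over $\vec z$ gives
\begin{equation}
    \rho(\vec{Z}_b|\theta,\vec{Z}_b|\theta')\;\le\;\sum_{s,s'}\sqrt{a_s b_{s'}}\,\rho(\vec{C}_s,\vec{C}_{s'},P_{Z|W}).
\end{equation}
The diagonal $s=s'$ piece equals $\sum_s\sqrt{a_s b_s}=\rho({\rm Bin}(k,\theta),{\rm Bin}(k,\theta'))=e^{-k\db(\theta,\theta')}$ by tensorization, while the off-diagonal piece is at most $(k+1)\,e^{-(k-o(k))\cdot\tfrac12 D(\tfrac12\|p)}$ via $\sum_s\sqrt{a_s}\le\sqrt{k+1}$ (Cauchy-Schwarz) together with the codebook's minimum distance. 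Multiplying across the $B$ blocks yields $\rho(\vec{Z}|\theta,\vec{Z}|\theta')\le e^{-n\min\{\db(\theta,\theta'),\,\tfrac12 D(\tfrac12\|p)\}+o(n)}$, since the polynomial prefactors contribute only $B\log(k+1)=(n/k)\log(k+1)=o(n)$ under $k\to\infty$, $k=o(n)$. A monotonicity check with Lemma \ref{lem:eps_equiv} identifies the worst pair with $|\theta-\theta'|\ge 2\epsilon-\delta$ as the one symmetric about $1/2$, giving $\db(\theta,\theta')\ge D(\tfrac12\|\tfrac12+\epsilon-\delta/2)\to D(\tfrac12\|\tfrac12+\epsilon)$ as $\delta\to 0$, which yields the claimed exponent.

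The main obstacle is the per-block mixture bound above: both laws on $\vec{Z}_b$ are mixtures over the same $k+1$ codewords with only the mixing weights depending on $\theta$, and we must separate cleanly the ``source'' cost $e^{-k\db(\theta,\theta')}$ (from the mixing weights, which are exactly two binomials) from the ``channel'' cost $e^{-k\cdot\tfrac12 D(\tfrac12\|p)}$ (from the off-diagonal codeword Bhattacharyya coefficients) without incurring a multiplicative loss in $k$. A secondary but nontrivial issue is the discrepancy between $\theta^*$ and the nearest grid point $\theta^*_\Theta$: the pairwise tests compare grid-point hypotheses but the data are generated under $P_{\theta^*}$, so one must verify that the Bhattacharyya-bound analysis is robust to this perturbation, which is achieved by shrinking the $2\epsilon$ threshold by $\delta$ and taking $\delta\to 0$ slowly with $n$. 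This clean decomposition — with $k$ multiplying both contributions with no additional factor — is precisely what lets the final exponent take the $\min$ form, avoiding the ``divided'' form (factors $\lambda,1-\lambda$) of the estimate-and-forward baseline in Lemma \ref{lem:est_forward}.
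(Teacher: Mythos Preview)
Your protocol and strategy mirror the paper's (same block structure, $k{+}1$-codeword codebook, set-$S$ decoder via pairwise tests, diagonal/off-diagonal split); your per-block mixture bound on $\rho(\vec{Z}_b\mid\theta,\vec{Z}_b\mid\theta')$ is a clean variant of the paper's device of introducing $f(\theta,\vec{Z})=\sum_\alpha\sqrt{\p(\alpha\mid\theta)\,\p(\vec{Z}\mid\vec{W}_\alpha)}$ and bounding $\EE_{\theta^*}[f(\theta')/f(\theta^*)]$ directly under the true parameter.

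There is, however, a genuine gap where you invoke the Bhattacharyya bound. The inequality $P_e\le\rho(P_\theta,P_{\theta'})$ controls the likelihood-ratio test when the data are drawn from $P_\theta$ or $P_{\theta'}$; but your data come from $P_{\theta^*}$, and $\theta^*\ne\theta^*_\Theta$ in general. Shrinking the test threshold from $2\epsilon$ to $2\epsilon-\delta$ addresses only the \emph{geometric} issue (so that $\theta^*_\Theta\in S$ forces $|\hat\theta_n-\theta^*|\le\epsilon$); it does nothing for the \emph{probabilistic} one. What is missing is a pointwise change-of-measure: with grid spacing $\delta=O(1/n)$ and rounding toward $1/2$, one has $a_s(\theta^*)/a_s(\theta^*_\Theta)\le(1+O(1/n))^k$ uniformly in $s$, hence $P_{\theta^*}(\vec{z})/P_{\theta^*_\Theta}(\vec{z})\le e^{O(k/n)}$ for every $\vec{z}$, and the product over the $n/k$ blocks is $O(1)$. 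This is exactly the content of the paper's Lemma~\ref{lem:round_n}, and is what converts $\p_{\theta^*_\Theta}[\text{favor }\theta']\le\rho$ into $\p_{\theta^*}[\text{favor }\theta']\le O(1)\cdot\rho$; merely taking ``$\delta\to0$ slowly'' is not enough, and a constant $\delta>0$ would leave $P_{\theta^*}^{\otimes n}$ and $P_{\theta^*_\Theta}^{\otimes n}$ with an unbounded likelihood ratio. A separate minor slip: any two points of $S$ are within $2\epsilon-\delta$ of each other (else they test against each other and one is excluded), so $\max S-\min S<2\epsilon-\delta$; your stated interval width $4\epsilon-2\delta$ is too weak for the midpoint argument to yield $|\hat\theta_n-\theta^*_\Theta|\le\epsilon-\delta/2$.
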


\begin{remark} \label{rem:bernoulli}
    We make the following remarks on this result:
    \begin{enumerate}
        \item The exponent $D\big(\frac12 \big\| \frac12 + \epsilon\big)$ can in fact exceed what would be achieved by the sample mean estimator (in the setting of direct observations), which is $\min\big( D(\theta^* - \epsilon \| \theta^*), D(\theta^* + \epsilon \| \theta^*) \big)$ by the method of types or Sanov's theorem (e.g., setting $\theta^* = \frac{1}{2}$ and $\epsilon = 0.1$, the exponents are roughly 0.0204 and 0.0201 respectively).  Hence, the sample mean can be suboptimal when the recovery criterion is the error exponent for a given $\epsilon$.  On the other hand, the sample mean does not require knowledge of $\epsilon$, whereas our protocol does use such knowledge.
        \item We recall that $E^*$ is defined with respect to the worst-case choice of $\theta^* \in [0,1]$, so our result is minimax in nature.  Our proof will also reveal an instance-dependent achievable exponent (i.e., dependent on $\theta^*$; see \eqref{eq:e_theta} below), but for our converse in Corollary \ref{cor:bern_conv} below, we focus on minimax guarantees.
    \end{enumerate}    
\end{remark}

To understand the tightness of Theorem \ref{thm:bernoulli}, we state the following corollary of Lemma \ref{lem:noncausal}.

\begin{cor} \label{cor:bern_conv}
    {\em (Bernoulli+BSC Converse)}
    Under a Bernoulli source, a BSC with parameter $p \in \big(0,\frac{1}{2}\big)$, and an accuracy parameter $\epsilon \in \big(0,\frac{1}{2}\big)$, it holds (even in the non-causal setting) that
    \begin{equation}
        E^* \le \min\Bigg(D\bigg(\frac12 \Big\| \frac12 + \epsilon\bigg), c_{\lceil 1/(2\epsilon) \rceil} \cdot \frac{1}{2} D\big(1/2 \big\| p\big)\Bigg), \label{eq:bern_conv}
    \end{equation}
    where $c_{\lceil 1/(2\epsilon) \rceil} = 1+2\epsilon+O(\epsilon^2)$ as $\epsilon \to 0$, and the exact expression is given in \eqref{eq:cM_bsc}.
\end{cor}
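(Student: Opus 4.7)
The plan is to obtain Corollary \ref{cor:bern_conv} as a direct specialization of Lemma \ref{lem:noncausal} to the Bernoulli+BSC setting, together with a short asymptotic calculation for $c_{\lceil 1/(2\epsilon) \rceil}$.

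First I would verify the hypothesis of Lemma \ref{lem:noncausal}: since the Bernoulli parameter $\theta$ ranges over all of $[0,1]$, the set of possible means is $\Theta^* = [0,1]$, as required. The converse half of the lemma then immediately yields
\[
E^* \le \min\bigl\{E^{\rm src}_\epsilon,\; E^{\rm chan}_{\lceil 1/(2\epsilon) \rceil}\bigr\},
\]
and this holds even in the non-causal setting.

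Next, I would substitute the closed-form expressions already recorded in the paper for the Bernoulli+BSC case. For the Bernoulli class, $E^{\rm src}_\epsilon = D\bigl(\tfrac{1}{2} \bigm\| \tfrac{1}{2}+\epsilon\bigr)$, which coincides with $D\bigl(\tfrac{1}{2} \bigm\| \tfrac{1}{2}-\epsilon\bigr)$ by Lemma \ref{lem:eps_equiv}. For the BSC, Lemma \ref{lem:zero_rate} applied with a uniform input distribution gives $E^{\rm chan}(0) = \tfrac{1}{2} D(1/2 \| p)$, so Definition \ref{def:cM} yields $E^{\rm chan}_M = c_M \cdot \tfrac{1}{2} D(1/2 \| p)$. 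Plugging both expressions into the bound above produces \eqref{eq:bern_conv}.

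The only remaining task is to verify the asymptotic $c_{\lceil 1/(2\epsilon) \rceil} = 1 + 2\epsilon + O(\epsilon^2)$ as $\epsilon \to 0$. Using Lemma \ref{lem:cM}(i), a short simplification gives $c_M = 1 + \tfrac{1}{M-1}$ for even $M$ and $c_M = \tfrac{M+1}{M} = 1 + \tfrac{1}{M}$ for odd $M$. Writing $M = \lceil 1/(2\epsilon) \rceil = \tfrac{1}{2\epsilon} + O(1)$ and expanding either fraction as a geometric series in $\epsilon$ gives $1 + 2\epsilon + O(\epsilon^2)$. I do not foresee a serious obstacle: the converse content is already carried by Lemma \ref{lem:noncausal}, and the only mild subtlety is checking that the ceiling does not spoil the leading-order expansion, which is immediate since both the even-$M$ and odd-$M$ branches of $c_M$ yield the same expansion after substitution.
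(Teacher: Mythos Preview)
Your proposal is correct and follows essentially the same route as the paper: apply the converse part of Lemma~\ref{lem:noncausal}, substitute the Bernoulli source exponent $E^{\rm src}_\epsilon = D(\tfrac12\|\tfrac12+\epsilon)$, and rewrite the channel term via Definition~\ref{def:cM} and Lemma~\ref{lem:cM}. The only presentational difference is that the paper supplies a short hypothesis-testing argument in the proof itself to justify $E^{\rm src}_\epsilon \le D(\tfrac12\|\tfrac12+\epsilon)$ (reducing $\epsilon$-accurate estimation to distinguishing ${\rm Ber}(\tfrac12\pm\epsilon')$), whereas you cite this value as already recorded; conversely, you spell out the $c_M$ asymptotics that the paper merely asserts.
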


Thus, the upper and lower bounds match whenever the $\min\{\cdot,\cdot\}$ in \eqref{eq:bern_ach} is achieved by the first term.  Moreover, the second terms match to within a multiplicative factor of $1+2\epsilon+O(\epsilon^2)$.  In certain scenarios the exact (non-asymptotic) multiplicative factor has a simple expression; in particular, if $\frac{1}{2\epsilon}$ is an even integer, then by \eqref{eq:cM_bsc} the factor is exactly $\frac{1}{1-2\epsilon}$ (i.e., $\frac{M}{M-1}$ with $M = \frac{1}{2\epsilon}$).

We will prove Theorem \ref{thm:bernoulli} in Sections \ref{sec:bern_protocol} and \ref{sec:pf_bern}, and we will prove Corollary \ref{cor:bern_conv} in Section \ref{sec:pf_bern_conv}.  

\subsection{Numerical comparisons} \label{sec:numerical}

We compare our Bernoulli+BSC exponents with the baselines from Section \ref{sec:baselines} in two ways:
\begin{itemize}
    \item In Figure \ref{fig:numerical1}, we fix the BSC parameter $p = 0.1$ and vary $\epsilon \in \big(0,\frac12\big)$;
    \item In Figure \ref{fig:numerical2}, we fix the accuracy parameter $\epsilon = 0.1$ and vary $p \in \big(0,\frac12\big)$.
\end{itemize}

We see that the the achievable error exponents for simple forwarding and one-shot estimate-and-forward strategies are mostly highly suboptimal, though simple forwarding becomes near-optimal for $\epsilon$ near $\frac{1}{2}$ (low accuracy)
%\footnote{In Figure \ref{fig:numerical1} simple forwarding may also appear to be near-optimal for $\epsilon$ near zero, but in fact a Taylor expansion gives $E_{\epsilon}^{\rm src} = D\big(\frac{1}{2} \| \frac{1}{2}+\epsilon\big) = 2\epsilon^2 + o(1)$, which means the exponent $E_{\epsilon(1-2p)}^{\rm src}$ of simple forwarding is suboptimal by a factor of roughly $(1-2p)^2$.} 
and one-shot estimate-and-forward becomes near-optimal for $\epsilon$ near $0$ (high accuracy) and for $p$ near $\frac12$ (high noise), i.e., when the source exponent is small or the channel exponent is small.

We see in Figure \ref{fig:numerical1} that our protocol has an \emph{optimal error exponent} that \emph{matches the direct-observation exponent $E_{\epsilon}^{\rm src}$} for a wide range of $\epsilon$ values of the form $[0,\epsilon_0]$; while not visible in this figure, the value of $\epsilon_0$ turns out to increase (resp., decrease) as $p$ decreases (resp., increases).  Moreover, we see in Figure \ref{fig:numerical2} that at least in this example with $\epsilon = 0.1$, our protocol achieves the optimal exponent for most values of $p$, and the gap to the converse is still fairly small for the remaining values of $p$.

We expect that improved protocols (both causal and non-causal) can be devised for $\epsilon$ close to $\frac12$, e.g., by specifically only using $M=2$ codewords of the form $0\dotsc0$ and $1\dotsc1$.  However, since we view this regime as being of less interest compared to small-to-moderate $\epsilon$, we do not pursue this point further.

    \begin{figure}
        \begin{centering}
            \includegraphics[width=0.4\columnwidth]{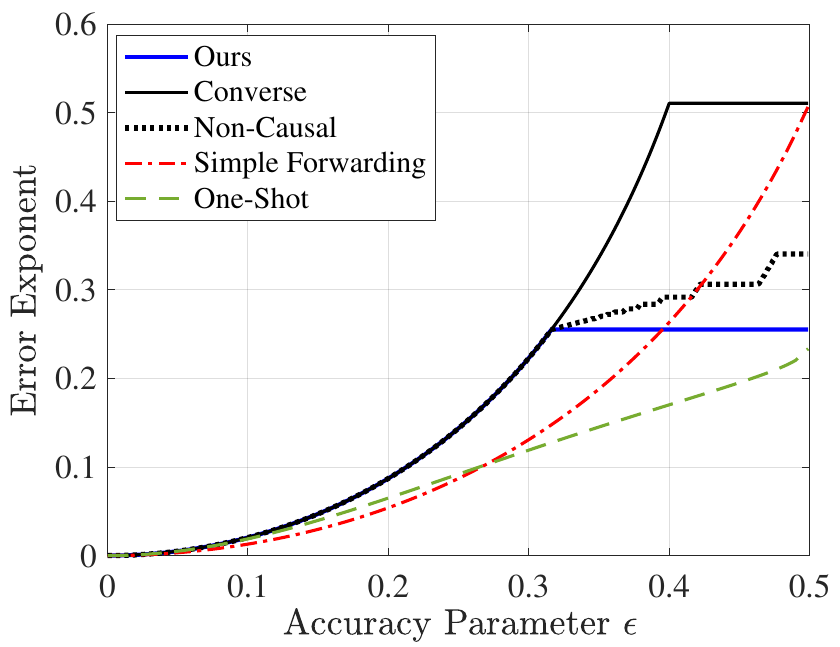}
            \par
        \end{centering}
        
        \caption{Comparison of error exponents in the Bernoulli+BSC setting with crossover probability $p = 0.1$. 
 (The jagged behavior of the curve `Non-Causal' for higher $\epsilon$ comes from requiring an integer number of codewords and that integer becoming small.) \label{fig:numerical1}}
    \end{figure}

    \begin{figure}
        \begin{centering}
            \includegraphics[width=0.4\columnwidth]{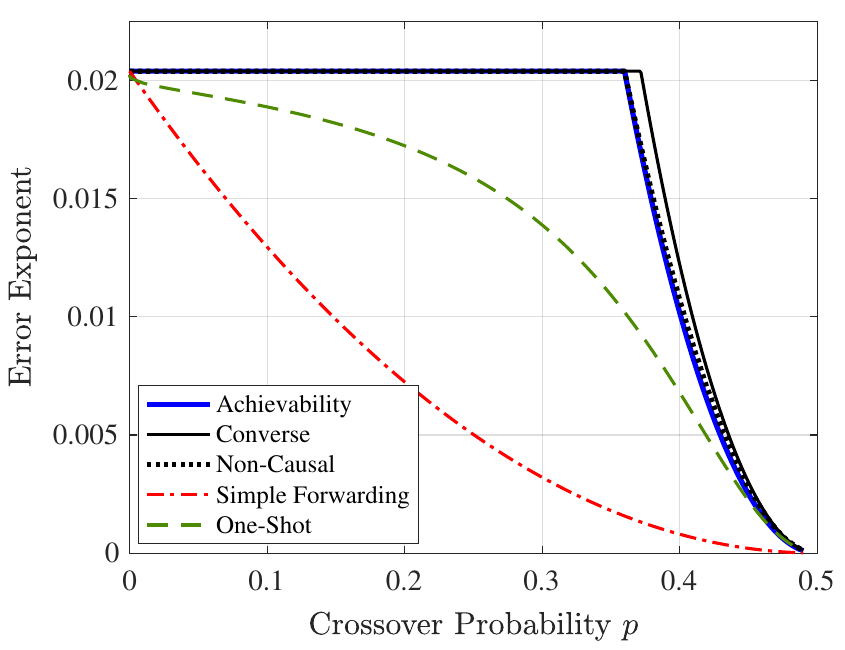}
            \par
        \end{centering}
        
        \caption{Comparison of error exponents in the Bernoulli+BSC setting with accuracy parameter $\epsilon = 0.1$.\label{fig:numerical2}}
    \end{figure}

\subsection{Description of the protocol} \label{sec:bern_protocol}

\subsubsection{Block structure}
    As noted in the introduction, we will use a block-structured design in the same way as the 1-bit learning setup of \cite{teachlearn}, but with differing details of what is done within each block.
    
    In more detail, the transmission protocol of length $n$ is broken down into $n/k$ blocks, each of size $k$.\footnote{To simplify notation we assume that $n$ is a multiple of $k$; if not, we can round down to the nearest multiple of $k$ by ignoring at most $k$ symbols, and this only does not impact the final result since we will later set $k = o(n)$.}  For each $j=1,2,\ldots, \frac{n}{k}-1$, we let $( W_{ik+1}, W_{ik+2}, \ldots, W_{(i+1)k})$ be a (deterministic) function of $(X_{(i-1)k+1}, X_{(i-1)k+2}, \dotsc, X_{ik})$. The first $k$ symbols $Z_1, Z_2, \ldots, Z_k$ are ignored by the student, and the last $k$ samples $X_{n-k+1},X_{n-k+2},\dotsc,X_n$ are ignored by the teacher.
	In other words, the $j$-th block received by the teacher only depends on the $(j-1)$-th block of samples received by the student; see Figure \ref{block-protocol} for an illustration. To simplify notation, we define
	\begin{equation}
	\vec{Z}^j = (Z_{jk+1}, Z_{jk+2}, \ldots, Z_{(j+1)k}) 
	\end{equation}
	to be the $j$-th block received by student (excluding the ignored one), and let $\vec{Z}$ refer to a generic block.

	\begin{figure}
    \centering
	\begin{tikzpicture}
		[line cap=round,line join=round,>=triangle 45,x=1.0cm,y=1.0cm, scale=0.6]
		\clip(-1,-1) rectangle (16, 7);
		\draw [line width=1.pt] (0.,0.)-- (12.,0.);
		\draw [line width=1.pt] (0.,1.)-- (12.,1.);
		\draw [line width=1.pt] (0.,3.)-- (12.,3.);
		\draw [line width=1.pt] (12.,1.)-- (12.,0.);
		\draw [line width=1.pt] (0.,1.)-- (0.,0.);
		\draw [line width=1.pt] (4.,1.)-- (4.,0.);
		\draw [line width=1.pt] (8.,1.)-- (8.,0.);
		\draw [line width=1.pt] (0.,4.)-- (12.,4.);
		\draw [line width=1.pt] (12.,3.)-- (12.,4.);
		\draw [line width=1.pt] (8.,3.)-- (8.,4.);
		\draw [line width=1.pt] (4.,3.)-- (4.,4.);
		\draw [line width=1.pt] (0.,3.)-- (0.,4.);
		\draw [->,line width=1.pt] (0,5) -- (13,5);
		\draw (14,5) node{time};
		\draw (0, 4.5) node{1};
		\draw (3.8, 4.5) node{$k$};
		\draw (7.7, 4.5) node{$2k$};
		\draw (11.7, 4.5) node{$3k$};
		\draw (0, -0.5) node{1};
		\draw (3.8, -0.5) node{$k$};
		\draw (7.7, -0.5) node{$2k$};
		\draw (11.7, -0.5) node{$3k$};
		\draw [->,line width=1.pt] (2,3) -- (6,1);
		\draw [->,line width=1.pt] (6,3) -- (10,1);
		\draw [fill=red] (0,3) rectangle (4,4);
		\draw [fill=red] (4,0) rectangle (8,1);
		\draw [fill=blue] (4,3) rectangle (8,4);
		\draw [fill=blue] (8,0) rectangle (12,1);
		\draw (14,3.5) node{teacher};
		\draw (14,0.5) node{student};
	\end{tikzpicture}
	
	\caption{A diagrammatic representation of the block protocol}
	\label{block-protocol}
	\end{figure}
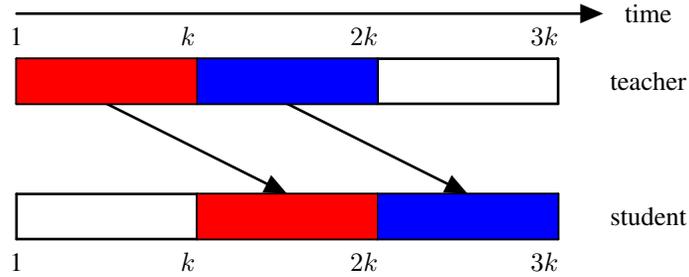

\subsubsection{Teacher strategy}
The teacher and student first agree on a length-$k$ codebook over BSC($p$) with $k+1$ codewords such that any two codewords differ in at least $\frac{k}{2}-o(k)$ bits (see Lemma \ref{lem:bsc_hamming_distance}). Let the codewords be $\vec{W}_0, \vec{W}_1, \ldots, \vec{W}_k$. 

The teacher reads in blocks of length $k$ as described above. For each block, the teacher counts the number of 1s and sends the corresponding message of the codebook.  That is, when the number of 1s received in a block is $\alpha \in \{0,1,\dotsc,k\}$, the corresponding codeword is $\vec{W}_\alpha$.

\subsubsection{Student strategy}
Let $\vec{Z}$ represent a generic block of length $k$ received by the student, and define
\begin{equation}
    g(\theta, \alpha, \vec{Z}) = \sqrt{\p(\alpha|\theta)\p(\vec{Z}| \vec{W}_\alpha)}
    \label{eq:g_def}
\end{equation}
and
\begin{equation}
    f(\theta, \vec{Z}) = \sum_{\alpha} g(\theta, \alpha, \vec{Z}).
\end{equation}
Intuitively, $f$ behaves like a likelihood function up to polynomial pre-factors; this is because if the summation over $\alpha$ were \emph{inside} the square root in \eqref{eq:g_def}, we would precisely have the square root of the likelihood.  Defining the summation to be outside the square root turns out to be more convenient for the analysis.

Recall that $\vec{Z}^j$ is the $j$-th block received by the student. Let $T$ be the set of integer multiples of $\frac1{2n}$ in $[0,1]$, and define the set 
\begin{equation}
    S = \bigg\{ \theta \in T \,\Big|\, \prod_{j=1}^{n/k-1} f(\theta, \vec{Z}^j) > \prod_{j=1}^{n/k-1} f(\theta', \vec{Z}^j), ~ \forall \theta' \in T\text{ s.t. } |\theta-\theta'|\geq 2\epsilon - \frac1{n}\bigg\}. \label{eq:setS}
\end{equation}
The student then outputs the final estimate $\hat{\theta}_n = \frac12 (\min S + \max S)$.  Intuitively, the $\theta$ values in $S$ are those that ``beat'' all $2\epsilon$-far $\theta'$ values in a binary hypothesis test when $f$ is used for the decision rule.

\subsection{Proof of Theorem \ref{thm:bernoulli} (Bernoulli+BSC achievability)} \label{sec:pf_bern}

By definition, $S$ is contained within an interval of $2 \epsilon - \frac1{n}$; this is because if some $\theta \in S$ ``beats'' some $\theta'$ then the opposite is automatically false. Let $\theta^{**}$ be $\theta^*$ rounded to an integer multiple of $\frac{1}{2n}$, where we round down if $\theta^* > \frac12$ and round up if $\theta^* < \frac12$ (i.e. we always round towards $\frac12$). Observe that if $\theta^{**} \in S$, then $\theta^{**} \in [\min S, \max S]$ and therefore the choice $\hat{\theta}_n = \frac12 (\min S + \max S)$ ensures that $|\theta^{**} - \hat{\theta}_n | \leq \frac{1}{2}\big(2\epsilon - \frac{1}{n} \big) = \epsilon - \frac{1}{2n}$, which implies $|\theta^{*} - \hat{\theta}_n | \leq \epsilon$ by the triangle inequality.  Therefore, the error probability (for $\epsilon$-deviations) is upper bounded as follows:
\begin{equation}
    \p(|\theta^{*} - \hat{\theta}_n | > \epsilon) \le \p(\theta^{**} \notin S). \label{eq:err_prob}
\end{equation}
 Towards bounding $\p(\theta^{**} \notin S)$, we will first show that for any $\theta'$ satisfying $|\theta^* - \theta'| \geq 2\epsilon$, it holds with suitably high probability that $\prod_j f(\theta^*, \vec{Z}^j) > 2 \prod_j f(\theta', \vec{Z}^j)$.  In Lemma \ref{lem:round_n} below, we will consider the effect of discretization and show that $f(\theta^{**},\vec{Z})$ does not differ too much from $f(\theta^*,\vec{Z})$.
 
\begin{lemma} \label{lem:Ef_Bern}
    Let $\theta^*$ be the true parameter and $\theta'$ be some other value. Then
    \begin{equation}
        \mathbb{E}\bigg( \frac{f(\theta', \vec{Z})}{f(\theta^*, \vec{Z})}\bigg) \leq \exp\bigg(-(k-o(k)) \cdot \min\Big(\db(\theta^*, \theta'), \frac12 D\big(1/2 \big\| p\big) \Big)\bigg).
    \end{equation}
\end{lemma}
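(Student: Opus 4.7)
The plan is to exploit the fact that $f(\theta,\vec{z})$ is essentially a ``square root'' of the true block likelihood. Let $P_{\vec{Z}}^{\theta}(\vec{z}) := \sum_{\alpha}\p(\alpha|\theta)\,\p(\vec{z}|\vec{W}_\alpha)$ denote the distribution of $\vec{Z}$ when the true parameter is $\theta$. The elementary inequality $\sum_i \sqrt{x_i} \ge \sqrt{\sum_i x_i}$ applied termwise gives $f(\theta^*,\vec{z}) \ge \sqrt{P_{\vec{Z}}^{\theta^*}(\vec{z})}$, equivalently $P_{\vec{Z}}^{\theta^*}(\vec{z}) \le f(\theta^*,\vec{z})^2$. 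Plugging this into the expectation cancels one factor of $f(\theta^*,\vec{z})$ in the denominator:
\begin{equation*}
    \EE\!\left[\frac{f(\theta',\vec{Z})}{f(\theta^*,\vec{Z})}\right] = \sum_{\vec{z}} P_{\vec{Z}}^{\theta^*}(\vec{z})\,\frac{f(\theta',\vec{z})}{f(\theta^*,\vec{z})} \le \sum_{\vec{z}} f(\theta^*,\vec{z})\,f(\theta',\vec{z}).
\end{equation*}

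Next, I would expand with the definition of $f$ and interchange the order of summation. Since $\sum_{\vec{z}}\sqrt{\p(\vec{z}|\vec{W}_\alpha)\,\p(\vec{z}|\vec{W}_{\alpha'})} = \rho(\vec{W}_\alpha,\vec{W}_{\alpha'},P_{Z|W})$ by definition, this produces
\begin{equation*}
    \sum_{\vec{z}} f(\theta^*,\vec{z})\,f(\theta',\vec{z}) = \sum_{\alpha,\alpha'} \sqrt{\p(\alpha|\theta^*)\,\p(\alpha'|\theta')}\; \rho(\vec{W}_\alpha,\vec{W}_{\alpha'},P_{Z|W}),
\end{equation*}
which I would then split into diagonal ($\alpha=\alpha'$) and off-diagonal ($\alpha\neq\alpha'$) contributions. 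On the diagonal $\rho(\vec{W}_\alpha,\vec{W}_\alpha,P_{Z|W})=1$, and $\sum_\alpha\sqrt{\p(\alpha|\theta^*)\p(\alpha|\theta')}$ is exactly the Bhattacharyya coefficient between Binomial$(k,\theta^*)$ and Binomial$(k,\theta')$, which by tensorization over $k$ i.i.d.\ Bernoulli coordinates (Lemma \ref{lem:tens_db}) equals $\rho(\theta^*,\theta')^k=\exp(-k\,\db(\theta^*,\theta'))$.

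For the off-diagonal, the codebook from Lemma \ref{lem:bsc_hamming_distance} guarantees Hamming distance at least $k/2-o(k)$ between distinct codewords, and a direct computation (or Lemma \ref{lem:eps_equiv} with $\epsilon=\tfrac{1}{2}-p$) yields the per-symbol Bhattacharyya distance $\db(0,1,P_{Z|W})=D(1/2\|p)$, so Lemma \ref{lem:tens_db} gives $\rho(\vec{W}_\alpha,\vec{W}_{\alpha'},P_{Z|W})\le \exp\!\bigl(-(k/2-o(k))\,D(1/2\|p)\bigr)$ for every $\alpha\neq\alpha'$. The residual weight $\sum_{\alpha\neq\alpha'}\sqrt{\p(\alpha|\theta^*)\p(\alpha'|\theta')} \le \bigl(\sum_\alpha\sqrt{\p(\alpha|\theta^*)}\bigr)\bigl(\sum_{\alpha'}\sqrt{\p(\alpha'|\theta')}\bigr) \le k+1$ by Cauchy-Schwarz on each factor, which is $e^{o(k)}$. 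Adding the two contributions bounds the expectation by $\exp(-k\,\db(\theta^*,\theta')) + \exp\!\bigl(-(k/2-o(k))\,D(1/2\|p)\bigr)$, and absorbing the factor of $2$ into the $o(k)$ term yields the claim.

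I expect the main hurdle to be spotting the right opening move: the ratio $f(\theta',\vec{Z})/f(\theta^*,\vec{Z})$ is not a likelihood ratio, so standard changes of measure do not apply, and pointwise upper bounds on the ratio are too lossy. The asymmetric inequality $P_{\vec{Z}}^{\theta^*}\le f(\theta^*,\cdot)^2$ is what unlocks the computation, replacing the true block law by a Bhattacharyya-friendly surrogate and cleanly decoupling the source contribution (tensorization over Bernoulli coordinates, giving $\db(\theta^*,\theta')$) from the channel contribution (tensorization over BSC coordinates combined with the codebook's minimum distance, giving $\tfrac{1}{2}D(1/2\|p)$).
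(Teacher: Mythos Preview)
Your proposal is correct and, after the opening step, coincides exactly with the paper's argument: both arrive at the bound $\sum_{\alpha,\alpha'}\sqrt{\p(\alpha|\theta^*)\p(\alpha'|\theta')}\,\rho(\vec{W}_\alpha,\vec{W}_{\alpha'},P_{Z|W})$ and then split diagonal/off-diagonal in the same way. The only difference is cosmetic: the paper conditions on the true $\alpha$ and lower-bounds $f(\theta^*,\vec{Z})\ge g(\theta^*,\alpha,\vec{Z})$, whereas you use $P_{\vec{Z}}^{\theta^*}\le f(\theta^*,\cdot)^2$; a two-line computation shows these yield the identical double sum, so the two routes are essentially the same.
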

\begin{proof}
    By the law of total probability and the definition of $f$, we have
    \begin{align}
        \mathbb{E}\bigg( \frac{f(\theta', \vec{Z})}{f(\theta^*, \vec{Z})}\bigg) 
            &= \sum_{\alpha} \p(\alpha|\theta^*) \mathbb{E} \bigg( \frac{f(\theta', \vec{Z})}{f(\theta^*, \vec{Z})}\Big| \alpha\bigg) \\
            & \leq  \sum_{\alpha} \p(\alpha|\theta^*) \mathbb{E} \bigg( \frac{f(\theta', \vec{Z})}{g(\theta^*,\alpha, \vec{Z})}\bigg| \alpha\bigg) \\
            & =\sum_{\alpha} \sum_{\alpha'}\p(\alpha|\theta^*) \mathbb{E} \bigg( \frac{g(\theta',\alpha', \vec{Z})}{g(\theta^*,\alpha, \vec{Z})}\bigg| \alpha\bigg).
        \label{eq:split_alpha_bsc}
    \end{align}
    We split the above sum into $\alpha=\alpha'$ and $\alpha \neq \alpha'$:
    \begin{itemize}
        \item When $\alpha = \alpha'$, the cancellation of $\p(\vec{Z}|\vec{W}_\alpha)$ terms in \eqref{eq:g_def} leads to the following:
    \begin{equation}
        \frac{g(\theta',\alpha, \vec{Z})}{g(\theta^*,\alpha, \vec{Z})} = \sqrt{\frac{\p(\alpha|\theta')}{\p(\alpha|\theta^*)}},
    \end{equation}
    which implies
    \begin{equation}
        \sum_\alpha \p(\alpha|\theta^*)\frac{g(\theta',\alpha, \vec{Z})}{g(\theta^*,\alpha, \vec{Z})} = 
        \sum_\alpha \p(\alpha|\theta^*)  \sqrt{\frac{\p(\alpha|\theta')}{\p(\alpha|\theta^*)}} = \exp(-k\cdot \db(\theta, \theta')).
        \label{eq:alpha_equal_bsc}
    \end{equation}
    \item When $\alpha \neq \alpha'$, substituting the definition of $g$ and applying $\p(\alpha|\theta^*) \le 1$ gives
    \begin{equation}
         \p(\alpha|\theta^*) \frac{g(\theta', \alpha', \vec{Z})}{g(\theta^*, \alpha, \vec{Z})} \leq \p(\alpha|\theta^*) \sqrt\frac{\p(\vec{Z}|\vec{W}_{\alpha'})} {\p(\alpha|\theta^*)\p(\vec{Z}|\vec{W}_{\alpha})} \leq \sqrt\frac{\p(\vec{Z}|\vec{W}_{\alpha'})} {\p(\vec{Z}|\vec{W}_{\alpha})},
    \end{equation}
    and taking the expectation given $\alpha$, we obtain
    \begin{equation}
        \mathbb{E}\left(\sqrt\frac{\p(\vec{Z}|\vec{W}_{\alpha'})}{\p(\vec{Z}|\vec{W}_{\alpha})}\Big| \alpha\right) = \sum_{\vec{Z}} \p(\vec{Z}|\vec{W}_\alpha) \sqrt\frac{\p(\vec{Z}|\vec{W}_{\alpha'})}{\p(\vec{Z}|\vec{W}_{\alpha})} = \sum_{\vec{Z}} \sqrt{\p(\vec{Z}|W_{\alpha}) \cdot \p(\vec{Z}|W_{\alpha})} = \rho(\vec{W}_{\alpha}, \vec{W}_{\alpha'},P_{Z|W}).
    \end{equation}
    By Lemma \ref{lem:bsc_hamming_distance}, $W_{\alpha}$ and $W_{\alpha'}$ have Hamming distance $\frac k2 - o(k)$, which gives % $\rho(\vec{W}_\alpha, \vec{W}_{\alpha'},P_{Z|W}) \leq \exp(-(k+o(k))\frac12 D(1/2 \| p))$ and hence
    \begin{equation}
        \sum_{(\alpha,\alpha') \,:\, \alpha \neq\alpha'}\p(\alpha|\theta^*) \mathbb{E} \bigg( \frac{g(\theta',\alpha', \vec{Z})}{g(\theta^*,\alpha, \vec{Z})}\big| \alpha\bigg) \leq \sum_{(\alpha,\alpha') \,:\, \alpha \neq \alpha'} \rho(\vec{W}_\alpha, \vec{W}_{\alpha'},P_{Z|W})  \leq \exp\Big(-(k-o(k))\frac12 D(1/2 \| p)\Big),
        \label{eq:alpha_not_equal_bsc}
    \end{equation}
    where the last step follows from tensorization (Lemma \ref{lem:tens_db}) and the identity between $\db$ and KL divergence (Lemma \ref{lem:eps_equiv} with $\epsilon = \frac{1}{2}-p$).
    \end{itemize}
    Substituting \eqref{eq:alpha_equal_bsc} and \eqref{eq:alpha_not_equal_bsc} into \eqref{eq:split_alpha_bsc} gives the required result.
\end{proof}

Since the $n/k-1$ blocks are independent, Lemma \ref{lem:Ef_Bern} implies the following when $k$ is chosen such that $k \to \infty$ and $n/k \to \infty$ (e.g., $k = \sqrt{n}$):
\begin{equation}
    \mathbb{E} \bigg(\prod_j \frac{f(\theta',\vec{Z}^j)}{f(\theta^*, \vec{Z}^j)} \bigg)=\prod_j \mathbb{E} \bigg(\frac{f(\theta',\vec{Z}^j)}{f(\theta^*, \vec{Z}^j)} \bigg) \leq \exp(-(n-o(n))\cdot \min\bigg(\db(\theta^*, \theta'), \frac12 D\big(1/2 \big\| p\big)\bigg).
\end{equation}

% If $\theta'$ came from a finite set, we could immediately proceed to apply Markov's inequality and a union bound.  However, $\theta'$ can be any real value in $[0,1]$ satisfying $|\theta - \theta'| > 2\epsilon$. To address this, we discretize $\theta'$ and apply a union bound only to rounded values via the following lemma.

Next, we provide the following lemma relating the true value $\theta^* \in [0,1]$ to its rounded version $\theta^{**} \in T$.

\begin{lemma} For all possible $(\vec{Z}^1, \vec{Z}^2, \ldots, \vec{Z}^{n/k-1})$, it holds that $\prod_j \frac{f(\theta^*, \vec{Z}^j)}{f(\theta^{**}, \vec{Z}^j)} \leq 2$.
\label{lem:round_n}
\end{lemma}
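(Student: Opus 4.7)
The plan is to establish a per-block bound $f(\theta^*, \vec{Z}) \leq C_k \cdot f(\theta^{**}, \vec{Z})$ that holds uniformly in $\vec{Z}$, and then multiply across the $n/k - 1$ blocks. Rewriting the definition in factored form as $f(\theta, \vec{Z}) = \sum_\alpha \sqrt{\p(\alpha|\theta)} \cdot \sqrt{\p(\vec{Z}|\vec{W}_\alpha)}$, I observe that if $\sqrt{\p(\alpha|\theta^*)/\p(\alpha|\theta^{**})} \leq C_k$ uniformly in $\alpha$, then $C_k$ pulls out of the sum and the per-block bound follows immediately. Hence the task reduces to controlling $\max_\alpha \p(\alpha|\theta^*)/\p(\alpha|\theta^{**})$.

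For the Bernoulli source, this ratio equals $(\theta^*/\theta^{**})^\alpha \big((1-\theta^*)/(1-\theta^{**})\big)^{k-\alpha}$; the binomial coefficient cancels. Because $\theta^{**}$ is obtained by rounding $\theta^*$ towards $1/2$, it lies between $\theta^*$ and $1/2$, so exactly one of the two base ratios exceeds $1$ while the other is at most $1$. The maximum over $\alpha \in \{0, 1, \ldots, k\}$ is therefore attained at an endpoint: $\alpha = k$ when $\theta^* > 1/2$, and $\alpha = 0$ when $\theta^* < 1/2$ (the case $\theta^* = 1/2$ is trivial since then $\theta^{**} = \theta^*$). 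The rounding convention is crucial here: it guarantees $\theta^{**} \geq 1/2$ in the first case and $1 - \theta^{**} \geq 1/2$ in the second, which combined with $|\theta^* - \theta^{**}| \leq 1/(2n)$ yields the uniform bound $1 + 1/n$ on the dominant base ratio.

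Combining these pieces, $C_k \leq (1 + 1/n)^{k/2}$. Since there are fewer than $n/k$ blocks, the full product is at most $(1 + 1/n)^{n/2} \leq \sqrt{e} < 2$, which is the claimed bound.

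The potentially delicate step is the endpoint analysis in the middle paragraph: what makes the bound work is that rounding towards $1/2$ (rather than, say, rounding to the nearest multiple of $1/(2n)$) keeps the relevant denominator bounded below by $1/2$, so the per-block ratio stays at $(1 + O(1/n))^{k/2}$ even when $\theta^*$ is near the endpoints $0$ or $1$. The rest is essentially a single application of $(1+1/n)^n \leq e$.
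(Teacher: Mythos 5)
Your proof is correct and follows essentially the same route as the paper's: bound the per-block ratio $\p(\alpha|\theta^*)/\p(\alpha|\theta^{**})$ uniformly in $\alpha$ by $(1+1/n)^k$ using the fact that rounding towards $\tfrac12$ keeps the relevant denominator at least $\tfrac12$, pull the square root of this constant out of the sum defining $f$, and conclude with $(1+1/n)^{n/2}\le\sqrt{e}<2$. The only cosmetic difference is that you locate the maximizing $\alpha$ at an endpoint explicitly, while the paper bounds the ratio directly via the factorization over the $k$ i.i.d.\ Bernoulli trials.
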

\begin{proof} 
Recall that $\theta^{**}$ is defined by rounding $\theta^*$ to the nearest multiple of $\frac{1}{2n}$ towards $\frac{1}{2}$.
We focus on the case that $\theta^{*}\geq\theta^{**}\geq 1/2$; the case $\theta^* \le \theta^{**} \le 1/2$ is analogous and is omitted.  We have
\begin{equation}
    \frac{\p(\alpha|\theta^{*})}{\p(\alpha|\theta^{**})} \leq \left(\frac{\theta^{*}}{\theta^{**}}\right)^{k} \leq \left(\frac{\theta^{**} + \frac{1}{2n}}{\theta^{**}}\right)^{k} \leq \Big(1+\frac1n\Big)^k \leq \exp(k/n),
\end{equation}
where we respectively used that $\alpha$ is a sum of $k$ i.i.d.~Bernoulli random variables, the assumption $|\theta^* - \theta^{**}| < \frac1{2n}$, the assumption $\theta^* \ge \frac{1}{2}$, and the identity $1+a \le e^a$.  It follows that
\begin{equation}
    f(\theta^{*}, \vec{Z}) = \sum_{\alpha} \sqrt{\p(\alpha|\theta^{*}) \p(\vec{Z}|\vec{W}_{\alpha})} \leq \exp(k/(2n)) \sum_{\alpha} \sqrt{\p(\alpha|\theta^{**}) \p(\vec{Z}|\vec{W}_{\alpha})} = \exp(k/(2n)) f(\theta^{**},\vec{Z}),
\end{equation}
and hence
\begin{equation}
    \prod_j \frac{f(\theta^{*}, \vec{Z}^j)}{f(\theta^{**}, \vec{Z}^j)} \leq \exp(k/(2n))^{n/k-1} \le \exp(1/2) \leq 2.
\end{equation}
\end{proof}

Recall that $T$ is the set of all integer multiples of $\frac1{2n}$ in $[0,1]$, and that $S$ is defined in \eqref{eq:setS}.  It follows that if we define
\begin{equation}
    T_{\epsilon} = T \setminus \Big[\theta^{**}-2\epsilon+\frac1n, \theta^{**}+2\epsilon-\frac1n\Big], 
\end{equation}
then the condition $\theta^** \notin S$ implies the existence of some $\theta' \in T_{\epsilon}$ such that $\prod_j f(\theta', \vec{Z}^j) \geq \prod_j f(\theta^{**},\vec{Z}^j)$.  This further implies via Lemma \ref{lem:round_n} that $2 \prod_j f(\theta',\vec{Z}^j) \geq 2\prod_j f(\theta^{**}, \vec{Z}^j)\geq \prod_j f(\theta^*,\vec{Z}^j)$.  In addition, since $|\theta^{**}-\theta^*|\leq \frac1{2n}$ and $|\theta'-\theta^{**}|\geq 2\epsilon-\frac1n$, we have $|\theta^{'}-\theta^*| \geq 2\epsilon-\frac3{2n}$ due to the triangle inequality.

Combining the above observations, we have
\begin{eqnarray}
    \p(\theta^{**} \notin S) & = &\p \Bigg(\bigcup_{\theta' \in T_{\epsilon}} \bigg\{ \prod_j f(\theta',\vec{Z}^j) \ge \prod_j f(\theta^{**},\vec{Z}^j) \bigg\}\Bigg) \nonumber \\
    &\leq& \p \Bigg(\bigcup_{\theta' \in T_{\epsilon}} \bigg\{2\prod_j f(\theta'. \vec{Z}^j) \ge \prod_j f(\theta^{*},\vec{Z}^j)\bigg\}\Bigg) \label{eq:setT_step1}\\
    & \leq& \sum_{\theta' \in T_{\epsilon}} \p \Bigg( 2\prod_j f(\theta'. \vec{Z}^j) \ge \prod_j f(\theta^{*},\vec{Z}^j)\Bigg) \label{eq:setT_step2}\\
    & \leq & 2\sum_{\theta' \in T_{\epsilon}} \mathbb{E} \Bigg( \prod_j \frac{f(\theta',\vec{Z}^j)}{f(\theta^*, \vec{Z}^j)}\Bigg) \label{eq:setT_step3}\\
    & \leq &\exp(-(n+o(n))\cdot \min\bigg(\min_{|\theta'-\theta^*| \ge 2\epsilon - \frac{3}{2n}}\db(\theta^*, \theta'), \frac12 D\big(1/2 \big\| p\big)\bigg), \label{eq:setT_step4}
\end{eqnarray}
where \eqref{eq:setT_step1} holds as discussed above, \eqref{eq:setT_step2} uses a union bound, \eqref{eq:setT_step3} applies Markov's inequality, and \eqref{eq:setT_step4} applies Lemma \ref{lem:Ef_Bern} and the above-established inequality $|\theta^{'}-\theta^*| \geq 2\epsilon-\frac3{2n}$.

By \eqref{eq:err_prob}, \eqref{eq:setT_step4}, and the continuity of $\db(\theta_1,\theta_2)$, we deduce that it is possible to achieve an error exponent of 
\begin{equation}
    E \geq \min\bigg(\min_{|\theta'-\theta^*|\geq 2\epsilon}\db(\theta^*, \theta'), \frac12 D\big(1/2 \big\| p\big) \bigg),
    \label{eq:e_theta}
\end{equation}
and it remains to simplify the first term.  This is done in the following lemma, which we prove in Appendix \ref{app:pf_simplification_Bern} using some elementary calculus.

\begin{lemma} \label{lem:simplification_Bern}
    For all $\theta'$ and $\theta^*$ satisfying $|\theta'-\theta^*|\geq 2\epsilon$, it holds that $\db(\theta^*, \theta') \ge D\big(\frac{1}{2} \big\| \frac{1}{2} + \epsilon\big)$.
\end{lemma}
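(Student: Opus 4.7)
The plan is to work directly with the Bhattacharyya coefficient $\rho(\theta^*, \theta') = \sqrt{\theta^*\theta'} + \sqrt{(1-\theta^*)(1-\theta')}$. First, by Lemma~\ref{lem:eps_equiv}, the right-hand side equals
\[
D\Big(\tfrac{1}{2} \Big\| \tfrac{1}{2}+\epsilon\Big) = -\tfrac{1}{2}\log(1-4\epsilon^2).
\]
Exponentiating, the desired inequality $\db(\theta^*,\theta') \geq -\tfrac{1}{2}\log(1-4\epsilon^2)$ is equivalent to
\[
\rho(\theta^*,\theta')^2 \leq 1 - 4\epsilon^2.
\]
Since the hypothesis gives $(\theta^*-\theta')^2 \geq 4\epsilon^2$, it suffices to prove the stronger ($\epsilon$-free) uniform bound
\[
\rho(\theta^*,\theta')^2 \leq 1 - (\theta^*-\theta')^2,
\]
because the right-hand side is then at most $1-4\epsilon^2$.

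To establish this stronger bound, I would write $a = \theta^*$ and $b = \theta'$, expand
\[
\rho(a,b)^2 = ab + (1-a)(1-b) + 2\sqrt{ab(1-a)(1-b)},
\]
and expand $1 - (a-b)^2 = 1 - a^2 - b^2 + 2ab$. Subtracting the first from the second and simplifying, the desired inequality reduces after routine algebra to
\[
2\sqrt{a(1-a)\cdot b(1-b)} \leq a(1-a) + b(1-b),
\]
which is the AM--GM inequality applied to $x = a(1-a)$ and $y = b(1-b)$. Tracing this back yields the lemma.

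There is essentially no main obstacle here: the only mildly non-obvious step is recognizing that one should prove the distance-parameterized bound $\rho^2 \leq 1 - (\theta^*-\theta')^2$ rather than attempting to argue monotonicity of $\db(\theta^*,\theta')$ in $|\theta^*-\theta'|$ directly (which would require handling the dependence on $\theta^*$ more carefully). Once that reformulation is in place, everything collapses to AM--GM.
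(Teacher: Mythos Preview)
Your proof is correct. Both you and the paper establish the same key intermediate inequality $\rho(\theta^*,\theta')^2 \le 1 - (\theta^*-\theta')^2$, but by different means. The paper introduces the change of variables $x=\tfrac{\theta^*+\theta'}{2}$, $y=\tfrac{\theta^*-\theta'}{2}$ and uses calculus: it fixes $y$, differentiates $\rho$ with respect to $x$, and checks via the second derivative that $x=\tfrac12$ is the global maximizer, yielding $\rho \le 2\sqrt{\tfrac14 - y^2}$. Your route is purely algebraic: you expand $\rho^2$ and $1-(a-b)^2$ directly and reduce the comparison to the AM--GM inequality $2\sqrt{a(1-a)\,b(1-b)} \le a(1-a)+b(1-b)$. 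Your argument is more elementary (no derivatives, no optimization) and arguably more transparent about \emph{why} equality is approached when $a(1-a)=b(1-b)$, i.e., when the pair is symmetric about $\tfrac12$; the paper's calculus argument, on the other hand, makes the role of the midpoint $x=\tfrac12$ explicit from the outset. Either way, the final specialization to $|\theta^*-\theta'|\ge 2\epsilon$ and the identification with $D\big(\tfrac12\,\|\,\tfrac12+\epsilon\big)$ via Lemma~\ref{lem:eps_equiv} is identical.
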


Combining this lemma with \eqref{eq:e_theta} yields Theorem \ref{thm:bernoulli}.

\subsection{Proof of Corollary \ref{cor:bern_conv} (Converse for non-causal teacher)} \label{sec:pf_bern_conv}

We apply the general converse result for non-causal protocols from Lemma \ref{lem:noncausal}, and then rewrite the channel term using $c_M$ from Definition \ref{def:cM} and its characterization for the BSC from Lemma \ref{lem:cM}.  It then only remains to show that the source term simplifies as $E^{\rm src}_{\epsilon} = D\big(\frac{1}{2} \| \frac{1}{2} + \epsilon\big)$. 
It follows immediately from Theorem \ref{thm:bernoulli} (e.g., paired with a noiseless binary channel) that $E^{\rm src}_{\epsilon} \ge D\big(\frac{1}{2} \| \frac{1}{2} + \epsilon\big)$, so it suffices to show a matching converse, i.e., $E^{\rm src}_{\epsilon} \le D\big(\frac{1}{2} \| \frac{1}{2} + \epsilon\big)$.

To see this, we simply observe that attaining accuracy $\epsilon$ implies being able to successfully perform a hypothesis test between two Bernoulli distributions with means $\frac12+\epsilon'$ and $\frac12-\epsilon'$ for any $\epsilon' > \epsilon$.   It is well-known (e.g., see \cite{berlekampI}) that the optimal error exponent for distinguishing these is $\db(\frac12-\epsilon', \frac12 + \epsilon')$, which equals $D(\frac12 \| \frac12 + \epsilon')$ by Lemma \ref{lem:eps_equiv}.  Taking $\epsilon' \to \epsilon$ from above gives the desired result.

\section{Sub-Gaussian Source and Arbitrary Discrete Memoryless Channel} \label{sec:subgaussian}

\subsection{Problem statement}

We now turn to a more general setting where $X$ is not necessarily known to lie in any parametric family, but is instead only known to be \emph{sub-Gaussian} with parameter $\sigma^2 > 0$ (after centering), i.e.,
\begin{equation}
    \mathbb{E}[\exp(s(X-\theta^*))] \leq \exp\Big(\frac12 \sigma^2 s^2\Big) ~~\forall s \in \mathbb{R},
\end{equation}
where $\theta^* = \EE[X]$.  
Note that $\sigma^2$ may differ from ${\rm Var}[X]$. 

We also generalize the communication channel to be an arbitrary discrete memoryless channel $P_{Z|W}$.\footnote{This generalization can also be incorporated into Section \ref{sec:bernoulli} in the same way.}  Our goal remains to characterize the error exponent associated with the event $|\hat{\theta}_n - \theta^*| > \epsilon$ for some $\epsilon > 0$, i.e., bound the optimal exponent $E^*$ as defined in \eqref{eq:E_def}, with $\calP_X$ being the sub-Gaussian class.

In this setting, allowing arbitrary $\theta^* \in \RR$ is not possible, because the decoder can only receive at most $O(n)$ bits of information, which cannot represent an unbounded real number within any given accuracy.
To simplify the exposition, we assume that $\theta^* \in [0,1]$, but with only minor modifications we can handle an arbitrary fixed interval (known to the teacher and student) or even an interval whose width grows as $O(n^C)$ for any fixed constant $C > 0$; see Section \ref{sec:discussion} for further discussion.

Under the sub-Gaussianity assumption, if we take $n$ samples $X_1, X_2, \ldots, X_n$ and let $\bar{X} = \frac{1}{n}\sum_{i=1}^n X_i$ be the sample mean, then we have (e.g., see \cite[Sec.~2.5]{vershynin}):
\begin{equation}
    \mathbb{P}(\bar{X} \geq \theta^* + \epsilon) \leq \exp\Big(-\frac{n\epsilon^2 }{2\sigma^2}\Big), \quad \mathbb{P}(\bar{X} \leq \theta^* - \epsilon) \leq \exp\Big(-\frac{n\epsilon^2 }{2\sigma^2}\Big). \label{eq:subg_bound}
\end{equation}
Hence, we have an error exponent of $\frac{\epsilon^2}{2\sigma^2}$ when the samples are observed directly.  We will shortly argue that this cannot be improved in general, by specializing to the Gaussian case.

Our main result in this section is the following.
 
\begin{theorem} \label{thm:subg_ach}
    {\em (Sub-Gaussian Achievability)}
    For sub-Gaussian sources $P_X$ with mean in $[0,1]$ and sub-Gaussianity parameter $\sigma^2 > 0$, any discrete memoryless channel $P_{Z|W}$, and any accuracy parameter $\epsilon \in \big(0,\frac{1}{2}\big)$, the following error exponent is achievable:
    \begin{equation}
        E^* \ge \min\Big(\frac{\epsilon^2}{2\sigma^2}, E^{\rm chan}(0)\Big). \label{eq:subg_ach}
    \end{equation}
\end{theorem}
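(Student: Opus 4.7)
The plan is to mirror the Bernoulli+BSC strategy of Sections \ref{sec:bern_protocol}--\ref{sec:pf_bern} with three substitutions: (i) replace the per-block summary ``number of 1s'' by the quantized block sample mean $\bar X_j$, (ii) replace the binary code of Lemma \ref{lem:bsc_hamming_distance} by the general DMC code of Corollary \ref{cor:db_dmc}, and (iii) replace the exact Bernoulli likelihood $\p(\alpha\mid\theta)$ inside $g$ by the sub-Gaussian surrogate $\tilde P(s\mid\theta)=\exp\bigl(-k(s-\theta)^2/(2\sigma^2)\bigr)$. The block structure ($k\to\infty$, $k/n\to 0$), the grid $T$ of multiples of $\tfrac{1}{2n}$ in $[0,1]$, the set $S$ from \eqref{eq:setS}, and the output $\hat\theta_n=\tfrac12(\min S+\max S)$ are unchanged. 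Concretely, the teacher quantizes $\bar X_j$ (clamped to $[-L,1+L]$ for $L$ large enough that the sub-Gaussian tail probability $2e^{-kL^2/(2\sigma^2)}$ is negligible compared to the target exponent) onto a grid $\mathcal G$ of $M=k^C$ points and transmits the codeword $\vec W_s$ from a codebook $\{\vec W_s\}_{s\in\mathcal G}$ obtained from Corollary \ref{cor:db_dmc}, so that $\rho(\vec W_s,\vec W_{s'},P_{Z\mid W})\le\exp(-(k-o(k))E^{\rm chan}(0))$ for all $s\neq s'$. The student uses $g(\theta,s,\vec Z)=\sqrt{\tilde P(s\mid\theta)\,\p(\vec Z\mid\vec W_s)}$ and $f(\theta,\vec Z)=\sum_{s\in\mathcal G} g(\theta,s,\vec Z)$.

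The crux is the sub-Gaussian analogue of Lemma \ref{lem:Ef_Bern}: for any $\theta^*,\theta'\in[0,1]$,
\begin{equation}
    \EE\!\left[\frac{f(\theta',\vec Z)}{f(\theta^*,\vec Z)}\right]\le\exp\!\left(-(k-o(k))\min\!\left(\tfrac{(\theta^*-\theta')^2}{8\sigma^2},\,E^{\rm chan}(0)\right)\right),
\end{equation}
which at $|\theta^*-\theta'|\ge 2\epsilon$ delivers the source exponent $\tfrac{\epsilon^2}{2\sigma^2}$. Following the template of Lemma \ref{lem:Ef_Bern}, I would bound $f(\theta^*,\vec Z)\ge g(\theta^*,s^*,\vec Z)$ with $s^*$ the transmitted quantized mean and split the resulting sum over $s'$ into $s'=s^*$ and $s'\neq s^*$. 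The diagonal term equals $\EE_{\bar X}\bigl[\exp\bigl(-\tfrac{k}{4\sigma^2}[(\bar X-\theta')^2-(\bar X-\theta^*)^2]\bigr)\bigr]$ up to a quantization error that is $o(1)$ in the exponent; expanding the bracket leaves a linear function of $\bar X$, and the sub-Gaussian MGF bound $\EE[e^{t(\bar X-\theta^*)}]\le e^{t^2\sigma^2/(2k)}$ collapses it to $\exp(-k(\theta^*-\theta')^2/(8\sigma^2))$. For the off-diagonal term, using $\sqrt{\tilde P(s'\mid\theta')}\le 1$ decouples $s'$ from the expectation, and after summing over the $M$ values of $s'$ the contribution is at most $M\cdot\max_{s\neq s'}\rho(\vec W_s,\vec W_{s'})\cdot\EE[\exp(k(\bar X-\theta^*)^2/(4\sigma^2))]$; the last factor is $O(1)$ by the standard square-sub-Gaussian MGF bound (valid since $\tfrac{k}{4\sigma^2}$ lies strictly below the critical value $1/(2\tau^2)=\tfrac{k}{2\sigma^2}$ with $\tau^2=\sigma^2/k$), $M=k^C$ is polynomial in $k$, and the middle factor is $\exp(-(k-o(k))E^{\rm chan}(0))$.

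Given this per-block bound, the remainder mirrors Section \ref{sec:pf_bern}: independence of the $n/k-1$ blocks exponentiates the bound to $\exp(-(n-o(n))\min(\cdot,\cdot))$; an analogue of Lemma \ref{lem:round_n} controls the rounding $\theta^*\to\theta^{**}\in T$ via the uniform ratio bound $\tilde P(s\mid\theta^*)/\tilde P(s\mid\theta^{**})\le\exp(O(kL/(n\sigma^2)))$ that follows from $s\in[-L,1+L]$ and $|\theta^*-\theta^{**}|\le\tfrac{1}{2n}$; Markov's inequality followed by a union bound over the polynomially many candidates $\theta'\in T$ yields the end-to-end estimate in the style of \eqref{eq:setT_step4}; and continuity of $(\theta^*-\theta')^2$ lets us pass from $|\theta^*-\theta'|\ge 2\epsilon-o(1)$ to $|\theta^*-\theta'|\ge 2\epsilon$. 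The main obstacle I foresee is controlling the square-sub-Gaussian MGF $\EE[\exp(k(\bar X-\theta^*)^2/(4\sigma^2))]$ uniformly over the sub-Gaussian class: its coefficient sits exactly at half the critical threshold, so while a bound holds, attaining the sharp constant $\tfrac{1}{2\sigma^2}$ in $\tilde P$ is delicate. Should this be problematic, I would first run the argument with $\tilde P(s\mid\theta)=\exp(-(\tfrac{1}{2\sigma^2}-\eta)k(s-\theta)^2)$ for small $\eta>0$, obtaining a source exponent that tends to $\tfrac{\epsilon^2}{2\sigma^2}$ as $\eta\to 0$.
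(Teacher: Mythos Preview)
Your proposal is correct and follows essentially the same route as the paper: identical block structure, the same surrogate $g(\theta,\alpha,\vec Z)=\exp(-k(\theta-\alpha)^2/(4\sigma^2))\sqrt{\p(\vec Z\mid\vec W_\alpha)}$, the same decoder based on $S$, and the same per-block bound $\EE[f(\theta',\vec Z)/f(\theta^*,\vec Z)]\le\exp(-(k-o(k))\min((\theta^*-\theta')^2/(8\sigma^2),E^{\rm chan}(0)))$ followed by Markov plus a union bound. The one substantive technical difference is in the off-diagonal term. You keep the expectation over the transmitted summary $s^*$ and are left with the factor $\EE[\exp(k(s^*-\theta^*)^2/(4\sigma^2))]$, which you correctly control via the square-sub-Gaussian MGF (and your worry is unfounded: the coefficient is at exactly half the critical value, so the Hubbard--Stratonovich trick gives the clean bound $(1-2\cdot\tfrac14)^{-1/2}=\sqrt 2$, no $\eta$-slack needed). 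The paper instead first replaces $\p(\alpha\mid\theta^*)$ by the pointwise tail bound $\exp(-k(\theta^*-\alpha)^2/(2\sigma^2))$ \emph{before} forming the ratio; this extra $\exp(-k(\theta^*-\alpha)^2/(2\sigma^2))$ cancels deterministically against the $\exp(+k(\theta^*-\alpha)^2/(4\sigma^2))$ coming from $1/g(\theta^*,\alpha,\vec Z)$, so no square-MGF is ever needed and the off-diagonal reduces directly to $\sum_{\alpha\neq\alpha'}\rho(\vec W_\alpha,\vec W_{\alpha'})$. The paper also pairs this with stochastic quantization of $X$ to multiples of $1/k$ (so $\alpha$ is automatically discrete), clipping to $[-k,k]$, and a finer $1/n^2$ grid for $T$; your combination of post-hoc quantization of $\bar X$, constant-width clipping $[-L,1+L]$, and the $1/(2n)$ grid is an equally valid set of choices.
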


To understand the tightness of this result, we state the following corollary of Lemma \ref{lem:noncausal} for Gaussian sources (which is a special case of sub-Gaussian).

\begin{cor} \label{cor:subg_conv}
    {\em (Gaussian Converse)}
    For Gaussian sources $P_X$ with mean in $[0,1]$ and variance $\sigma^2 > 0$, under any discrete memoryless channel $P_{Z|W}$, and any accuracy parameter $\epsilon \in \big(0,\frac{1}{2}\big)$, it holds (even in the non-causal setting) that
    \begin{equation}
        E^* \le \min\Big( \frac{\epsilon^2}{2\sigma^2}, c_{\lceil 1/(2\epsilon) \rceil} E^{\rm chan}(0)\Big), \label{eq:subg_conv}
    \end{equation}
    where $c_{\lceil 1/(2\epsilon) \rceil}$ is defined in Definition \ref{def:cM}, and satisfies $c_{\lceil 1/(2\epsilon) \rceil} \to 1$ as $\epsilon \to 0$ (by Lemma \ref{lem:cM}).
\end{cor}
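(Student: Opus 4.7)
The plan is to directly invoke Lemma \ref{lem:noncausal} applied with $\calP_X$ being the Gaussian class with mean in $[0,1]$ and variance $\sigma^2$; since a Gaussian exists for every possible mean, this class has set of possible means $\Theta^* = [0,1]$, as required by the lemma. The lemma immediately yields $E^* \le \min\{E^{\rm src}_\epsilon, E^{\rm chan}_{\lceil 1/(2\epsilon)\rceil}\}$. The channel term rewrites as $c_{\lceil 1/(2\epsilon)\rceil} E^{\rm chan}(0)$ by Definition \ref{def:cM}, and the asymptotic claim $c_{\lceil 1/(2\epsilon)\rceil} \to 1$ as $\epsilon \to 0$ is immediate from Lemma \ref{lem:cM}(ii) applied with $M = \lceil 1/(2\epsilon)\rceil \to \infty$. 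Everything therefore reduces to identifying the Gaussian source exponent as $E^{\rm src}_\epsilon = \frac{\epsilon^2}{2\sigma^2}$.

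I would establish this identity via matching bounds, mirroring the treatment of the Bernoulli source in Section \ref{sec:pf_bern_conv}. The achievability direction $E^{\rm src}_\epsilon \ge \frac{\epsilon^2}{2\sigma^2}$ is immediate from using the sample mean together with the sub-Gaussian tail bound \eqref{eq:subg_bound}, which applies in particular to Gaussians. For the matching upper bound $E^{\rm src}_\epsilon \le \frac{\epsilon^2}{2\sigma^2}$, fix any estimator $\hat{\theta}_n$ and any $\epsilon' \in (\epsilon, 1/2)$, and consider the two Gaussian hypotheses $P_\pm = \mathcal{N}\bigl(\tfrac{1}{2} \pm \epsilon',\, \sigma^2\bigr)$, both of which lie in $\calP_X$ thanks to $\epsilon' < 1/2$. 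Since $\epsilon' > \epsilon$, thresholding $\hat{\theta}_n$ at $\tfrac{1}{2}$ yields a binary test whose error probability under either hypothesis is at most the $\epsilon$-deviation probability of $\hat{\theta}_n$ under the corresponding distribution.

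Hence the worst of the two $\epsilon$-deviation probabilities lower bounds the minimax risk of testing $P_-$ vs.\ $P_+$ from $n$ i.i.d.\ samples. It is classical that this minimax risk decays with exponent equal to the Chernoff information, which for two equal-variance Gaussians coincides with the Bhattacharyya distance $\frac{(\mu_+ - \mu_-)^2}{8\sigma^2}$; substituting $\mu_\pm = \tfrac{1}{2} \pm \epsilon'$ gives $\frac{(\epsilon')^2}{2\sigma^2}$. Consequently $\inf_{P_X \in \calP_X}\limsup_n -\tfrac{1}{n}\log \p\bigl(|\hat{\theta}_n - \theta^*| > \epsilon\bigr) \le \frac{(\epsilon')^2}{2\sigma^2}$, and letting $\epsilon' \downarrow \epsilon$ completes the identification. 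I do not foresee a substantive obstacle: every step either quotes an already-established lemma from the excerpt or reproduces the Bernoulli converse argument with Gaussians in place of Bernoullis. The only genuine computation is the standard closed-form for the Chernoff information between two equal-variance Gaussians, and the only side condition to check is $\epsilon' < 1/2$, which is available because $\epsilon < 1/2$.
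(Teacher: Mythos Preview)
Your proposal is correct and follows essentially the same approach as the paper's own proof: invoke Lemma~\ref{lem:noncausal}, rewrite the channel term via Definition~\ref{def:cM}, and identify $E^{\rm src}_\epsilon = \frac{\epsilon^2}{2\sigma^2}$ by matching the sub-Gaussian tail bound against a binary hypothesis test between two Gaussians with means $2\epsilon'$ apart, then sending $\epsilon' \downarrow \epsilon$. The only cosmetic differences are that the paper cites Theorem~\ref{thm:subg_ach} (rather than \eqref{eq:subg_bound} directly) for the achievability half and references \cite[Sec.~11.9]{cover_thomas} for the Gaussian testing exponent rather than computing the Chernoff/Bhattacharyya value explicitly.
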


Thus, the upper and lower bounds match whenever the $\min\{\cdot,\cdot\}$ in \eqref{eq:subg_ach} is achieved by the first term.   Moreover, even the second terms match to within a factor that approaches 1 as $\epsilon \to 0$.  (Unlike the BSC, we do not have any simple exact expression or precise convergence rate.)
% In addition, we have $E^{\rm chan}_{\lfloor 1/(2\epsilon)\rfloor} \to E^{\rm chan}(0)$ as $\epsilon \to 0$, so the second terms also become increasingly close as $\epsilon \to 0$.  However, unlike the BSC case, we do not make any claim about the rate at which their ratio approaches one. 

We will prove Theorem \ref{thm:subg_ach} in Sections \ref{sec:bern_protocol} and \ref{sec:pf_bern}, and we will prove Corollary \ref{cor:subg_conv} in Section \ref{sec:pf_subg_conv}.

\subsection{Description of the protocol} \label{sec:subg_protocol}

We use the same block-structured approach as the one described in Section \ref{sec:bernoulli} for some block length $k$ satisfying $k \to \infty$ and $n/k \to \infty$, but with differing details described below. 
Before proceeding, we claim that it suffices to handle the case that $X$ satisfies the following assumption.

\begin{assump} \label{as:rounding}
    The distribution $P_X$ is such that $X$ is a multiple of $1/k$ with probability one.
\end{assump}

We proceed to argue that proving Theorem \ref{thm:subg_ach} under this assumption readily implies the general case.  To see this, suppose that in the general case, the teacher rounds $X$ as follows:
\begin{itemize}
    \item Let $X^{\uparrow}$ (respectively, $X^{\downarrow}$) equal $X$ rounded up (respectively, down) to the nearest multiple of $1/k$;
    \item Round to $X^{\uparrow}$ or $X^{\downarrow}$ with probability proportional to the relative position of $X$ in the interval $[X^{\downarrow},X^{\uparrow}]$, i.e., the associated probabilities are $\frac{X-X^{\downarrow}}{X^{\uparrow}-X^{\downarrow}}$ and $\frac{X^{\uparrow}-X}{X^{\uparrow}-X^{\downarrow}}$  respectively.
\end{itemize}
A direct calculation of the conditional mean (given $X$) shows the ``rounding noise'' has mean zero, and thus the rounded random variable still has mean $\theta^*$. (Note that this is often called \emph{stochastic quantization} and is not a new idea.)  However, the impact of the rounding on the sub-Gaussianity parameter $\sigma^2$ needs to be checked carefully.

Let $\Delta$ be the difference between the rounded value and the original value of $X$.  We characterize the sub-Gaussianity of the rounded random variable $X+\Delta$ as follows:
\begin{itemize}
    \item By assumption, $X$ is sub-Gaussian with parameter $\sigma^2$.
    \item We can use the boundedness of $\Delta$ to deduce that it is sub-Gaussian; specifically, since $|\Delta| \leq \frac1k$ and $\mathbb{E}[\Delta] = 0$, we have that $\Delta$ is sub-Gaussian with parameter $\frac 1{k^2}$ \cite[Corollary 2.6]{subgaussian}.
    \item The sub-Gaussianity of both $X$ and $X'$ implies the same for $X+X'$, but since the two are not independent we cannot simply sum their sub-Gaussianity parameters.  Rather, a more general property (based on H\"older's inequality) not requiring independence gives an overall sub-Gaussian parameter of $(\sigma + \frac1k)^2$ \cite[Theorem 2.7]{subgaussian}.\footnote{More generally, the property in \cite[Theorem 2.7]{subgaussian} is that if $X_1$ and $X_2$ have sub-Gaussianity parameters $\sigma_1^2$ and $\sigma_2^2$, then $X_1+X_2$ is sub-Gaussian with parameter $(\sigma_1 + \sigma_2)^2$.  
    In \cite{subgaussian} this result is attributed to \cite{buldygin1980sub}, but \cite{subgaussian} also has a self-contained proof.  If $X_1$ and $X_2$ are independent, the parameter further improves to $\sigma_1^2 + \sigma_2^2$.}
\end{itemize}
Our final error exponent in Theorem \ref{thm:subg_ach} is continuous in $\sigma$, so the change from $\sigma^2$ to $(\sigma + \frac1k)^2$ (with $k \to \infty$) does not impact the final result. 
In the remainder of the section, we adopt Assumption \ref{as:rounding}.

\subsubsection{Teacher strategy}

Let the sample mean be $\bar{X} = \frac{1}{n}\sum_{i=1}^n X_i$, and define $\alpha$ as follows:
\begin{equation}
    \alpha = \begin{cases}
        k & \bar{X} > k\\
        -k & \bar{X} < k\\
        \bar{X} & k \leq \bar{X} \leq k.
    \end{cases} \label{eq:alpha_def}
\end{equation}
Under Assumption \ref{as:rounding}, $\bar{X}$ is a multiple of $\frac1{k^2}$. Since $\alpha$ is simply given by $\bar{X}$ clipped to $[-k,k]$, it follows that $\alpha$ can take on one of at most $2k^3 + 1 \le 4k^3$ values.  Accordingly, we consider a set of at most $4k^3$ codewords of length $k$ for the channel $P_{Z|W}$ satisfying  Corollary \ref{cor:db_dmc}. The teacher reads in blocks of $k$, and transmits the codeword corresponding to the sample mean. If the sample mean is $\alpha$, we let $\vec{W}_{\alpha}$ be the corresponding codeword sent by the teacher.

We claim that $\alpha$ satisfies similar sub-Gaussian style bounds as the sample mean $\bar{X}$ (see \eqref{eq:subg_bound}); specifically, the bound that will be useful for our purposes is
\begin{equation}
    \p(\alpha|\theta^*) \leq \exp\left( -\frac{k(\theta^*-\alpha)^2}{2\sigma^2}\right),
    \label{eq:alpha_tail}
\end{equation}
where here and subsequently, $\p(\alpha|\theta^*)$ represents the PMF of $\alpha$ under an arbitrary distribution $P_X$ that (i) lies in the sub-Gaussian class $\mathcal{P}_X$, (ii) satisfies Assumption \ref{as:rounding}, and (iii) has mean $\theta^*$.  
To see that \eqref{eq:alpha_tail} holds, we consider the three cases in \eqref{eq:alpha_def}.  Firstly, if $\alpha \notin \{k, -k\}$, then $\bar{X} = \alpha$ and the tail bound for $\bar{X}$ trivially implies the same for $\alpha$.  On the other hand, if $\alpha = k$, then using $\theta^* < \alpha$ (recall that $\theta^* \in [0,1]$) and $\bar{X} > \alpha = k$ gives
\begin{equation}
    \mathbb{P}(\alpha|\theta^*) = \mathbb{P}(\bar{X}\geq k|\theta^*)  \leq \exp\left( -\frac{k(\theta^*-k)^2}{2\sigma^2}\right) = \exp\left( -\frac{k(\theta^*-\alpha)^2}{2\sigma^2}\right).
\end{equation}
The case $\alpha = -k$ follows from an analogous argument. 

\subsubsection{Student strategy}  Recall that $\vec{Z}$ denotes a generic length-$k$ block received by the student.  
We wish to take a similar approach as the Bernoulli + BSC case, for some function $g(\theta, \alpha, \vec{Z})$ defined similarly to \eqref{eq:g_def}. Since we are not given the probability distribution $\p(\alpha | \theta)$, we instead use $\exp(-\frac{k(\theta-\alpha)^2}{2\sigma^2})$ to align with the sub-Gaussianity of $X$ (see \eqref{eq:alpha_tail}).  Specifically, we define
\begin{equation}
    g(\theta, \alpha, \vec{Z}) = \exp\Big( -\frac{k(\theta-\alpha)^2}{4\sigma^2}\Big) \cdot p(\vec{Z}|\vec{W}_\alpha), \label{eq:g_subg}
\end{equation}
and
\begin{equation}
    f(\theta, \vec{Z}) = \sum_\alpha g(\theta, \alpha, \vec{Z}), \label{eq:f_subg}
\end{equation}
with $f$ serving as a ``proxy for the likelihood function'' as before.  
Furthermore, let $\vec{Z}^j$ denote the $j$-th block received by the student, and let $T$ be the set of all integer multiples of $\frac1{n^2}$ in $[0,1]$. Then, define
\begin{equation}
    S = \bigg\{ \theta \in T \,\Big|\, \prod_j f(\theta, \vec{Z}^j) > \prod_j f(\theta', \vec{Z}^j) ~ \forall \theta' \in T \text{ s.t. } |\theta-\theta'|\geq 2\epsilon - \frac2{n^2}\bigg\}.
\end{equation}
Similarly to Section \ref{sec:bernoulli}, $S$ is contained within an interval of length $2 \epsilon-\frac{2}{n^2}$, and we let the final estimate be $\hat{\theta}_n = \frac12(\min S + \max S)$.

\subsection{Proof of Theorem \ref{thm:subg_ach} (Achievability for sub-Gaussian sources)} \label{sec:pf_subg_ach}

% By definition, $S$ is contained within an interval of $2 \epsilon - \frac2{n^2}$. 
Let $\theta^{**}$ be $\theta^*$ rounded to the nearest integer multiple of $\frac{1}{n^2}$. Observe that if $\theta^{**} \in S$, then $\theta^{**} \in [\min S, \max S]$.  Hence, and using $\max S - \min S \le 2 \epsilon-\frac{2}{n^2}$ as established above, the choice $\hat{\theta}_n = \frac12 (\min S + \max S)$ ensures that $|\theta^{**} - \hat{\theta}_n | \leq \epsilon - \frac{1}{n^2}$, which implies $|\theta^{*} - \hat{\theta}_n | \leq \epsilon$ by the triangle inequality.  Therefore, the error probability is upper bounded by $\p(\theta^{**} \notin S)$. 

Towards characterizing $\p(\theta^{**} \notin S)$, we start with the following analog of Lemma \ref{lem:Ef_Bern}.

\begin{lemma} \label{lem:Ef_subg}
    Let $\theta^*$ be the true parameter, and let $\theta'$ be some other value. Then
    \begin{equation}
        \mathbb{E}\bigg( \frac{f(\theta', \vec{Z})}{f(\theta^*, \vec{Z})}\bigg) \leq \exp\bigg(-(k-o(k))\min\Big(\frac{(\theta^*-\theta')^2}{8\sigma^2}, E^{\rm chan}(0)\Big)\bigg)
    \end{equation}
\end{lemma}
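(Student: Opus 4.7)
The plan is to mirror the proof of Lemma \ref{lem:Ef_Bern}, with two key substitutions: the sub-Gaussian tail bound \eqref{eq:alpha_tail} replaces the exact Bernoulli PMF $\p(\alpha|\theta^*)$, and the low-rate DMC codebook guarantee from Corollary \ref{cor:db_dmc} replaces the Hamming-distance codebook from Lemma \ref{lem:bsc_hamming_distance}. Starting from the law of total probability and the pointwise bound $f(\theta^*, \vec{Z}) \ge g(\theta^*, \alpha, \vec{Z})$ (valid because $f$ is a sum of nonnegative terms), I would write
$$\mathbb{E}\!\left[\frac{f(\theta', \vec{Z})}{f(\theta^*, \vec{Z})}\right] \le \sum_{\alpha, \alpha'} \p(\alpha \mid \theta^*)\, \mathbb{E}\!\left[\frac{g(\theta', \alpha', \vec{Z})}{g(\theta^*, \alpha, \vec{Z})}\,\Big|\, \alpha\right],$$
and split the sum into the diagonal part $\alpha = \alpha'$ and the off-diagonal part $\alpha \neq \alpha'$.

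For $\alpha = \alpha'$ (the source-driven term), the $\vec{Z}$-factors in $g$ cancel and the conditional expectation reduces to $\exp\!\big(-k\big((\theta'-\alpha)^2 - (\theta^*-\alpha)^2\big)/(4\sigma^2)\big)$. Applying \eqref{eq:alpha_tail} to upper bound $\p(\alpha\mid\theta^*)$ and invoking the algebraic identity
$$(\theta^*-\alpha)^2 + (\theta'-\alpha)^2 = 2\big(\alpha - \tfrac{\theta^*+\theta'}{2}\big)^2 + \tfrac{1}{2}(\theta^*-\theta')^2,$$
pulls out the desired factor $\exp\!\big(-k(\theta^*-\theta')^2/(8\sigma^2)\big)$, while the remaining Gaussian-type sum over the at-most $O(k^3)$ admissible values of $\alpha$ is polynomial in $k$.

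For $\alpha \neq \alpha'$ (the channel-driven term), the conditional expectation of the $\vec{Z}$-ratio collapses to the Bhattacharyya coefficient $\rho(\vec{W}_\alpha, \vec{W}_{\alpha'}, P_{Z|W})$, which by Corollary \ref{cor:db_dmc} and the choice of codebook is uniformly at most $\exp\!\big(-(k-o(k))E^{\rm chan}(0)\big)$ across distinct codewords. To tame the residual Gaussian factors, I would again apply \eqref{eq:alpha_tail} in the form $\p(\alpha\mid\theta^*)\,\exp\!\big(k(\theta^*-\alpha)^2/(4\sigma^2)\big) \le \exp\!\big(-k(\theta^*-\alpha)^2/(4\sigma^2)\big)$, so that summing the product of Gaussian tails independently over $\alpha$ and $\alpha'$ incurs only a further polynomial-in-$k$ loss. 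Combining the two contributions and absorbing all polynomial factors into the $o(k)$ slack of the exponent yields the claimed bound with $\min\!\big((\theta^*-\theta')^2/(8\sigma^2),\, E^{\rm chan}(0)\big)$.

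The main obstacle, as I see it, is bookkeeping rather than a single conceptual hurdle. Because the sub-Gaussian hypothesis provides only an \emph{upper} bound on $\p(\alpha\mid\theta^*)$ (and no matching lower bound), the algebra must be arranged so that this probability appears only where an upper bound suffices, never as a denominator. A secondary nuisance is verifying that the $O(k^3)$-size support of $\alpha$ and the residual discrete Gaussian sums contribute only $\exp(o(k))$ factors, so that they can be safely absorbed into the $o(k)$ term in the final exponent.
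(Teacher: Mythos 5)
Your proposal is correct and follows essentially the same route as the paper: law of total expectation, the pointwise bound $f(\theta^*,\vec{Z}) \ge g(\theta^*,\alpha,\vec{Z})$, a diagonal/off-diagonal split in $(\alpha,\alpha')$, the sub-Gaussian tail bound \eqref{eq:alpha_tail} for the diagonal term, and Corollary \ref{cor:db_dmc} plus a polynomial count of $\alpha$ values for the off-diagonal term. The only cosmetic difference is that you extract the $\exp(-k(\theta^*-\theta')^2/(8\sigma^2))$ factor by completing the square, whereas the paper maximizes $-(\theta^*-\alpha)^2-(\theta'-\alpha)^2$ over $\alpha$ by differentiation; these are equivalent, and your observation that the argument must never require a lower bound on $\p(\alpha\mid\theta^*)$ correctly identifies why $g$ uses the Gaussian proxy rather than the true PMF.
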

\begin{proof}
    % Let $\bar{X}$ be the empirical mean of the student block $\vec{X}$ corresponding to $\vec{Z}$.  
    Recall the definition of $\alpha$ in \eqref{eq:alpha_def}, and that it takes on one of at most $4k^3$ values due to Assumption \ref{as:rounding}.  We proceed as follows:
    \begin{eqnarray}
        \mathbb{E}\bigg( \frac{f(\theta', \vec{Z})}{f(\theta^*, \vec{Z})}\bigg) &=& \sum_{\alpha} \p( \alpha | \theta^* )\mathbb{E} \bigg( \frac{f(\theta', \vec{Z})}{f(\theta^*, \vec{Z})}\Big| \alpha\bigg) \label{eq:Ef2_step1} \\
        & \leq & \sum_{\alpha} \exp\Big(-\frac{k(\theta^*-\alpha)^2}{2\sigma^2}\Big)\mathbb{E} \bigg( \frac{f(\theta', \vec{Z})}{f(\theta^*, \vec{Z})}\Big| \alpha\bigg) \label{eq:Ef2_step2}\\
        &\leq&  \sum_{\alpha} \exp\Big(-\frac{k(\theta^*-\alpha)^2}{2\sigma^2}\Big)\mathbb{E} \bigg( \frac{f(\theta', \vec{Z})}{g(\theta^*,\alpha, \vec{Z})}\bigg| \alpha\bigg) \label{eq:Ef2_step3}\\
        &=&\sum_{\alpha} \sum_{\alpha'}\exp\Big(-\frac{k(\theta^*-\alpha)^2}{2\sigma^2}\Big) \mathbb{E} \bigg( \frac{g(\theta',\alpha', \vec{Z})}{g(\theta^*,\alpha, \vec{Z})}\bigg| \alpha\bigg), 
        \label{eq:split_alpha_dmc}
    \end{eqnarray}
    where \eqref{eq:Ef2_step1} uses the law of total expectation, \eqref{eq:Ef2_step2} uses the sub-Gaussian concentration property (see \eqref{eq:alpha_tail}), and \eqref{eq:Ef2_step3}--\eqref{eq:split_alpha_dmc} use the definition of $f$.
    
    We split the above sum into the cases $\alpha=\alpha'$ and $\alpha \neq \alpha'$:
    \begin{itemize}
        \item When $\alpha = \alpha'$, the $\p(\vec{Z}|\vec{W}_\alpha)$ terms from \eqref{eq:g_subg} cancel out, and we obtain
    \begin{equation}
        \exp\Big(-\frac{k(\theta^*-\alpha)^2}{2\sigma^2}\Big)\frac{g(\theta',\alpha, \vec{Z})}{g(\theta^*,\alpha, \vec{Z})} = \exp\Big( -\frac{k(\theta'-\alpha)^2}{4\sigma^2}- \frac{k(\theta^*-\alpha)^2}{4\sigma^2}\Big)
    \end{equation}
    By a simple differentiation exercise, the maximum value of $-\frac{k(\theta'-\alpha)^2}{4\sigma^2}- \frac{k(\theta^*-\alpha)^2}{4\sigma^2}$ 
    occurs at $\alpha = \frac12 (\theta^* + \theta')$ and achieves the value $-\frac{k(\theta' - \theta^*)}{8\sigma^2}$.  Hence, and recalling that there are at most $4k^3$ values of $\alpha$, we obtain
    \begin{equation}
        \sum_\alpha  \exp\Big(-\frac{k(\theta^*-\alpha)^2}{2\sigma^2}\Big)\frac{g(\theta',\alpha, \vec{Z})}{g(\theta^*,\alpha, \vec{Z})} \leq 4k^3 \cdot \exp\bigg(-\frac{k(\theta' - \theta^*)^2}{8\sigma^2}\bigg).
        \label{eq:alpha_equal_dmc}
    \end{equation}
    \item When $\alpha \neq \alpha'$, we substitute the definition of $g$ to obtain
    \begin{equation}
          \exp\Big(-\frac{k(\theta^*-\alpha)^2}{2\sigma^2}\Big) \frac{g(\theta', \alpha', \vec{Z})}{g(\theta^*, \alpha, \vec{Z})} = \exp\Big(-\frac{k(\theta^*-\alpha)^2}{4\sigma^2}\Big) \cdot \exp\Big(-\frac{k(\theta'-\alpha)^2}{4\sigma^2}\Big)\sqrt\frac{\p(\vec{Z}|\vec{W}_{\alpha'})} {\p(\vec{Z}|\vec{W}_{\alpha})} \leq \sqrt\frac{\p(\vec{Z}|\vec{W}_{\alpha'})} {\p(\vec{Z}|\vec{W}_{\alpha})} \label{eq:neq_step1_subg}
    \end{equation}
    and taking the expectation given $\alpha$, we obtain
    \begin{equation}
        \mathbb{E}\bigg(\sqrt\frac{\p(\vec{Z}|\vec{W}_{\alpha'})}{\p(\vec{Z}|\vec{W}_{\alpha})}\Big| \alpha\bigg) = \sum_{\vec{Z}} \p(\vec{Z}|W_\alpha) \sqrt\frac{\p(\vec{Z}|\vec{W}_{\alpha'})}{\p(\vec{Z}|\vec{W}_{\alpha})} = \sum_{\vec{Z}} \sqrt{\p(\vec{Z}|W_{\alpha}) \cdot \p(\vec{Z}|W_{\alpha})} = \rho(\vec{W}_{\alpha}, \vec{W}_{\alpha'},P_{Z|W}). \label{eq:neq_step2_subg}
    \end{equation}
    We have from Corollary \ref{cor:db_dmc} that $\rho(\vec{W}_\alpha, \vec{W}_{\alpha'},P_{Z|W}) \leq \exp(-(k-o(k))E^{\rm chan}(0))$, and combining this with \eqref{eq:neq_step1_subg}--\eqref{eq:neq_step2_subg} gives
    \begin{eqnarray}
        & &\sum_{(\alpha,\alpha') \,:\, \alpha \neq\alpha'}\exp\bigg(-\frac{k(\theta^*-\alpha)^2}{2\sigma^2}\bigg) \mathbb{E} \bigg( \frac{g(\theta',\alpha', \vec{Z})}{g(\theta^*,\alpha, \vec{Z})}\bigg| \alpha\bigg) \\
        & \leq &  \sum_{(\alpha,\alpha') \,:\, \alpha \neq\alpha'}\mathbb{E}\bigg(\sqrt\frac{\p(\vec{Z}|\vec{W}_{\alpha'})}{\p(\vec{Z}|\vec{W}_{\alpha})}\Big| \alpha\bigg)\\
        &=& \sum_{(\alpha,\alpha') \,:\, \alpha \neq\alpha'}\rho(\vec{W}_\alpha, \vec{W}_{\alpha'},P_{Z|W}) \\
        &\leq& \exp(-(k-o(k))E^{\rm chan}(0)),
        \label{eq:alpha_not_equal_dmc}
    \end{eqnarray}
    \end{itemize}
    where the last step also uses that there are at most $4k^3 = {\rm poly}(k)$ values of $\alpha$.
    Substituting \eqref{eq:alpha_equal_dmc} and \eqref{eq:alpha_not_equal_dmc} into \eqref{eq:split_alpha_dmc} gives the desired result.
\end{proof}

We now move from studying a single block to the entire collection of $\frac{n}{k}-1$ blocks, and we let $k$ have an arbitrary dependence on $n$ satisfying $k \to \infty$ and $\frac{n}{k} \to \infty$ (e.g, $k = \sqrt{n}$).  We have
\begin{eqnarray}
    \mathbb{P} \Bigg(\prod_j\frac{f(\theta', \vec{Z}^j)}{f(\theta^*, \vec{Z}^j)} > \frac12 \Bigg)
    &\leq& 2 \cdot \mathbb{E} \Bigg(\prod_j\frac{f(\theta', \vec{Z}^j)}{f(\theta^*, \vec{Z}^j)} \Bigg) \label{eq:allblocks_subg_0} \\
    &=& 2 \cdot \prod_j \mathbb{E} \Bigg(\frac{f(\theta', \vec{Z}^j)}{f(\theta^*, \vec{Z}^j)} \Bigg) \label{eq:allblocks_subg_1}\\
    &\leq &\Bigg(\exp\Big( -(k-o(k))\min\Big(\frac{(\theta^*-\theta')^2}{8\sigma^2}, E^{\rm chan}(0)\Big) \Big)\Bigg) ^{n/k-1} \label{eq:allblocks_subg_2}\\
    &\leq & \exp\bigg( -(n-o(n))\min\Big(\frac{(\theta^*-\theta')^2}{8\sigma^2}, E^{\rm chan}(0)\Big)\bigg), \label{eq:allblocks_subg_3}
\end{eqnarray}
where \eqref{eq:allblocks_subg_0} applies Markov's inequality, \eqref{eq:allblocks_subg_1} uses the independence of the blocks, \eqref{eq:allblocks_subg_2} applies Lemma \ref{lem:Ef_subg}, and \eqref{eq:allblocks_subg_3} uses the above scaling assumptions on $k$.

Recalling that $\theta^{**}$ is $\theta^*$ rounded to the nearest multiple of $\frac1{n^2}$, it holds for all $\alpha$ that
\begin{equation}
    \left|\frac{k(\theta^*-\alpha)^2}{4\sigma^2} - \frac{k(\theta^{**}-\alpha)^2}{4\sigma^2} \right| \le \frac{k}{4\sigma^2}|\theta^{*}-\theta^{**}
    | \cdot |\theta^* + \theta^{**} - 2\alpha| \leq \frac{k^2}{n^2\sigma^2}, \label{eq:theta''_subg}
\end{equation}
where the first inequality uses $a^2 - b^2 = (a-b)(a+b)$, and the second inequality bounds the two absolute values by $\frac{1}{n^2}$ and $k+2 \le 4k$ respectively.  
Therefore,
\begin{eqnarray} % \Big( \frac{p}{1-p}\Big)^{d_H(\vec{Z}_\alpha, \vec{Z})/2}
    f(\theta^*, \vec{Z}) &=& \sum_\alpha\exp\Big( -\frac{k(\theta^*-\alpha)^2}{4\sigma^2}\Big) \cdot \p(\vec{Z}|\vec{W}_{\alpha}) \\
    &\leq &\exp\left(\frac{k^2}{n^2 \sigma^2}\right) \cdot \sum_\alpha\exp\Big( -\frac{k(\theta^{**}-\alpha)^2}{4\sigma^2}\Big) \cdot \p(\vec{Z}|\vec{W}_{\alpha}) \\
    &=&  \exp\left(\frac{k^2}{n^2 \sigma^2}\right) f(\theta^{**}, \vec{Z}),
\end{eqnarray}
with the middle step using \eqref{eq:theta''_subg} and the other two steps using the definition of $f$ in \eqref{eq:f_subg}.  Taking the product over all blocks $j=1,\dotsc,n/k-1$, we obtain
\begin{equation}
    \prod_j f(\theta^*, \vec{Z}^{j}) \leq \prod_j\exp\left(\frac{k^2}{n^2 \sigma^2}\right) f(\theta^{**}, \vec{Z}^{j}) \leq 2 \prod_j f(\theta^{**}, \vec{Z}^{j}),
\label{eq:rounding}
\end{equation}
where the last step holds when $n/k$ is sufficiently large (we have assumed that it approaches $\infty$ as $n \to \infty$).

Let $T_{\epsilon} = T \setminus [\theta^*-2\epsilon+\frac2{n^2}, \theta^*+2\epsilon-\frac2{n^2}]$.  If there exists some $\theta' \in T_{\epsilon}$ such that $\prod_j f(\theta', \vec{Z}^j) \geq \prod_j f(\theta^*,\vec{Z}^j)$, then \eqref{eq:rounding} gives $2 \prod_j f(\theta', \vec{Z}^j) \geq 2\prod_j f(\theta^{**}, \vec{Z}^j)\geq \prod_j f(\theta^{*},\vec{Z}^j)$.  Moreover, since $|\theta^{**}-\theta^*|\leq \frac1{n^2}$ and $|\theta^{**}-\theta'| \ge 2\epsilon - \frac{2}{n^2}$, the triangle inequality gives
\begin{eqnarray}
    |\theta'-\theta^*| \geq 2\epsilon - \frac{3}{n^2}. \label{eq:rounding_subg}
\end{eqnarray}
We can now follow similar steps to \eqref{eq:setT_step1}--\eqref{eq:setT_step4} as follows:
\begin{eqnarray}
    & &\p \Bigg(\bigcup_{\theta' \in T_{\epsilon}} \bigg\{\prod_j f(\theta',\vec{Z}^j) \ge \prod_j f(\theta^{**},\vec{Z}^j) \bigg\}\Bigg) \\
    &\leq& \p \Bigg(\bigcup_{\theta' \in T_{\epsilon}} \bigg\{ 2\prod_j f(\theta'. \vec{Z}^j) \ge \prod_j f(\theta^{*},\vec{Z}^j) \bigg\}\Bigg) \label{eq:T_subg_step1}\\
    & \leq& \sum_{\theta' \in T_{\epsilon}} \p \Bigg( 2\prod_j f(\theta', \vec{Z}^j) \ge \prod_j f(\theta^*,\vec{Z}^j)\Bigg) \label{eq:T_subg_step2}\\
    & \leq & 2\sum_{\theta' \in T_{\epsilon}} \mathbb{E} \Bigg( \prod_j \frac{f(\theta',\vec{Z}^j)}{f(\theta^*, \vec{Z}^j)}\Bigg) \label{eq:T_subg_step3}\\
    &\leq & \exp\bigg( -(n-o(n))\min\Big(\frac{(\theta^*-\theta')^2}{8\sigma^2}, E^{\rm chan}(0)\Big)\bigg) \label{eq:T_subg_step4}\\
    &\leq & \exp\bigg( -(n-o(n))\min\Big(\frac{\epsilon^2}{2\sigma^2}+O(1/n^2), E^{\rm chan}(0)\Big)\bigg) \label{eq:T_subg_step5}\\
    &\leq & \exp\bigg( -(n-o(n))\min\Big(\frac{\epsilon^2}{2\sigma^2}, E^{\rm chan}(0)\Big)\bigg), \label{eq:T_subg_step6}
\end{eqnarray}
where \eqref{eq:T_subg_step1} was established in the previous paragraph, \eqref{eq:T_subg_step2} applies the union bound, \eqref{eq:T_subg_step3} applies Markov's inequality, \eqref{eq:T_subg_step4} applies Lemma \ref{lem:Ef_subg}, and \eqref{eq:T_subg_step5} applies \eqref{eq:rounding_subg} and the fact that $|T_{\epsilon}|$ has polynomial size. 

Equation \eqref{eq:T_subg_step6} upper bounds the probability that $\theta^{**} \notin S$, and recalling that this in turn upper bounds the error probability, it follows that the error exponent $\min\big(\frac{\epsilon^2}{2\sigma^2}, E^{\rm chan}(0)\big)$ is achievable, as desired.

\subsection{Proof of Corollary \ref{cor:subg_conv} (Converse for Gaussian sources)} \label{sec:pf_subg_conv}

We apply the general converse result for non-causal protocols from Lemma \ref{lem:noncausal}, and then rewrite the channel term using $c_M$ from Definition \ref{def:cM}.  It then only remains to show that $E^{\rm src}_{\epsilon} = \frac{\epsilon^2}{2\sigma^2}$.  It follows immediately from Theorem \ref{thm:subg_ach} (e.g., paired with a noiseless binary channel) that $E^{\rm src}_{\epsilon} \ge \frac{\epsilon^2}{2\sigma^2}$, so it suffices to show a matching converse, i.e., $E^{\rm src}_{\epsilon} \le \frac{\epsilon^2}{2\sigma^2}$.

To see this, we simply observe that attaining accuracy $\epsilon$ implies being able to successfully perform a hypothesis test between two Gaussians with means separated by $2\epsilon'$, for any $\epsilon' > \epsilon$.  It is well-known that such a hypothesis test has an optimal error exponent of $\frac{(\epsilon')^2}{2\sigma^2}$ (e.g., see \cite[Sec.~11.9]{cover_thomas}).  Taking $\epsilon' \to \epsilon$ from above gives the desired result.

\section{Further Extensions and Discussion} \label{sec:discussion}

\subsection{Heavy-tailed random variables} \label{sec:heavy}

In our analysis, the only place where we used the sub-Gaussian style concentration property is in \eqref{eq:Ef2_step2}.  This observation lets us readily handle more general distributions by \emph{letting $\alpha$ itself be a more general ``base'' estimator for a single block} (not necessarily the sample mean).

In more detail, suppose that there exists an estimator $\alpha(\vec{X})$ operating on blocks $\vec{X}$ of size $k$ and satisfying the following properties:
\begin{enumerate}
    \item[(i)] The estimator satisfies the following for some $\sigma^2 > 0$ and any $\epsilon \in \big(0,\frac12\big)$:
    \begin{equation}
        \mathbb{P}(|\alpha(\vec{X}) - \theta^*| \geq \epsilon) \leq 2 \exp\Big(-\frac{k\epsilon^2 }{2\sigma^2} (1+o(1))\Big). \label{eq:heavy_bound}
    \end{equation}
    \item[(ii)] In the case that $X$ is a multiple of $1/k$ with probability one, there are only at most ${\rm poly}(k)$ values within the range $[-k,k]$ that the estimator can output.
    \item[(iii)] Property (i) continues to hold (possibly with a modified $1+o(1)$ term) after rounding to a multiple of $1/k$ in the manner described following Assumption \ref{as:rounding}.\footnote{With only minor adjustments, we can generalize $1/k$ to $1/k^C$ for any $C > 0$.}
\end{enumerate}
Under these conditions, we can apply the exact same analysis as the sub-Gaussian case, with the encoder now forming $\alpha$ using the base estimator rather than the sample mean.  The rest of the analysis is unchanged, and the error exponent in \eqref{eq:subg_ach} is again achievable.

As a well-known example, we note that the \emph{median-of-means} estimator with suitably-chosen parameters satisfies property (i) above with $\sigma^2 = 32 {\rm Var}[X]$ \cite[Thm.~2]{lugosi2019mean}, and properties (ii) and (iii) are straightforward to verify.\footnote{For property (ii), note that each of the ($k$ or fewer) intermediate means in the median-of-means method takes on one of ${\rm poly}(k)$ values (similar to Section \ref{sec:subg_protocol}), and the median will equal one of these intermediate means.   
For property (iii), similar to Section \ref{sec:subg_protocol}, we can write the rounded distribution as $X+\Delta$ and then use ${\rm Var}[X+\Delta] = {\rm Var}[X]+2{\rm Cov}[X,\Delta] + {\rm Var}[\Delta]$, which simplifies to  ${\rm Var}[X]+O(1/k)$ by writing ${\rm Cov}[X,\Delta] \le \sqrt{{\rm Var}[X]{\rm Var}[\Delta]}$ and ${\rm Var}[\Delta] = O(1/k^2)$.}  
Thus, this estimator attains sub-Gaussian like concentration assuming nothing other than finite variance.  The above argument allows us to transfer this guarantee for direct observations to an error exponent for our setting with coded relayed observations.

\subsection{Estimation beyond $\theta^* \in [0,1]$} \label{sec:theta_range}

In Section \ref{sec:subgaussian} we mentioned that for sub-Gaussian sources some restriction is needed on $\theta^*$, and we focused on $\theta^* \in [0,1]$.  However, the protocol and analysis easily extends to $\theta^* \in [-C,C]$ for any $C = {\rm poly}(k)$.  Since our analysis permits setting $k = \sqrt{n}$, this means that we can handle any $C = {\rm poly}(n)$.  The idea is that in the definition of $\alpha$ in \eqref{eq:alpha_def}, we clip the rewards to $[-C-k,C+k]$ instead of $[-k,k]$, and we are still left with only polynomially many possible outcomes.  The rest of the analysis is essentially unchanged except for the polynomial pre-factors. 

We also note that our converse results are based on Lemma \ref{lem:noncausal}, and if we generalize from $\theta^* \in [0,1]$ to $\theta^* \in [-C,C]$ then we should replace $E^{\rm chan}_{\lceil 1/(2\epsilon) \rceil}$ by $E^{\rm chan}_{\lceil C/\epsilon \rceil}$ therein.

\subsection{Vector-valued sources} \label{sec:vector}

Consider the case where the source is $d$-dimensional, and the setup from one of our previous sections (Bernoulli in Section \ref{sec:bernoulli} or sub-Gaussian in Section \ref{sec:subgaussian}) applies in every entry indexed by $i \in 1,\dotsc,d$.  We claim that in such a scenario, we can achieve the same error exponent in terms of $\epsilon$-accuracy in each component (separately) as what we achieved for the case that $d=1$.

To see this, we simply modify the protocol such that if $f(k)$ codewords were used when $d=1$, then $f(k)^d$ codewords are used more generally, and these are used to encode the behavior of each of the $d$ entries (e.g., the 
total number of 1s in each entry in the Bernoulli case).  Since $f(k)$ is polynomial, so is $f(k)^d$ for any constant $d$, and thus we can still use Lemma \ref{lem:bsc_hamming_distance} and Corollary \ref{cor:db_dmc}.  The subsequent analysis is essentially unchanged except that there are more $\alpha$ values, but still only polynomially many.  Moreover, since the error events have exponentially small probability, we can safely take a union bound over the $d$ components.

However, it may also be of interest to understand error guarantees beyond the entry-wise one, e.g., the squared $\ell_2$-error.  In such scenarios, entry-wise analyses appear to be insufficient, and parts of our protocol may require non-trivial changes (e.g., how to generalize the choice $\hat{\theta}_n = \frac{1}{2}\big( \min S + \max S \big)$).  Such considerations are left for future work.

% We can generalize our protocols to the situation where the source is a $d$-dimensional vector instead. The teacher can simply create $(4k^3)^d$ randomly generated codewords and the analysis still holds component wise.  {\bf \color{magenta} [TODO: Regarding vector valued sources, that is right. We just imagine that there are $(k^3)^{d-1}$ codewords for each alpha, and this blows up the polynomial prefactor. We handle things the same way, noting that for the functions $f,g$ we need to sum up over the larger set of alpha. Otherwise nothing changes.]} If we define the error event by the union of error events, then the same error exponent holds. However, for error guarantees defined with respect to other norms (e.g., $\ell_1$ or $\ell_2$), it appears to be insufficient to study the components separately; we leave such considerations for future work.

\subsection{Discussion on computational complexity} \label{sec:computation}

While our focus is on information-theoretic limits rather than computational considerations, we note that our protocol has polynomial time, even when an ``unstructured'' (e.g., random) code is used for the codebook.  At the teacher all that needs to be done is to compute $\alpha$ and transmit the corresponding codeword.  At the student, there are more steps of computation involved; starting with the Bernoulli+BSC case, we note the following:
\begin{itemize}
    \item Regarding the codebook used, the $k+1$ codewords of length $k$ can be stored in a size-$O(k^2)$ lookup table.
    \item The function $g$ in \eqref{eq:g_def} can be computed in $O(k)$ time by using a product over $k$ symbols to evaluate $\p(\vec{Z}| \vec{W}_\alpha)$.
    \item The function $f$ in \eqref{eq:f_subg} can be computed in $O(k^2)$ time by summing over $k+1$ values of $\alpha$, each of which computes $g$.
    \item We can compute any given $\prod_j f(\theta, \vec{Z}^j)$ in \eqref{eq:setS} in $O(nk)$ time since there are $n/k-1$ values of $j$.
    \item Hence, we can compute the set $S$ (and thus $\hat{\theta}_n$) in time $O(n^2 k)$ since there are $O(n)$ values of $\theta$ to consider.
\end{itemize}
In the sub-Gaussian case, a similar argument applies, but there are $O(k^3)$ values of $\alpha$ and $O(n^2$) values of $\theta$, so the computation time is $O(n^3 k^3)$.  This is perhaps higher than ideal, but nevertheless polynomial time.  For both the Bernoulli+BSC case and the sub-Gaussian case, an interesting direction for future work would be to attain similar results with a more efficient decoder (e.g., $n^{1+o(1)}$-time).

\section{Conclusion} \label{sec:conclusion}

We have studied large deviations bounds for the problem of statistical mean estimation with coded relayed observations.  For both Bernoulli and sub-Gaussian sources, and both BSCs and general DMCs, we have provided a block-structured protocol whose error exponent is optimal in broad cases of interest.

We conclude by highlighting some directions that may be of interest in future work:
\begin{enumerate}
    \item Our current results are strongest in medium-accuracy and high-accuracy regimes (i.e., small-to-moderate $\epsilon$).  In contrast, gaps still remain in lower-accuracy regimes (i.e., high $\epsilon$), and in such cases it remains unclear whether or not there exists a gap between the causal and non-causal settings.
    \item We only provided partial results for vector-valued sources, considering component-wise recovery.  It remains open to handle other performance measures such as $\ell_2^2$ error, which may require different techniques.
    \item We focused on minimax bounds (over $\theta^* \in [0,1]$ and/or over the sub-Gaussian class), and it would be of interest to better understand refined instance-dependent results (e.g., depending on $\theta^*$ in the Bernoulli case or depending on other distributional properties in the sub-Gaussian case).
    \item Our protocols rely on the teacher knowing both $\epsilon$ and $P_{Z|W}$, as well as $\sigma^2$ in the sub-Gaussian setting, whereas ideally one might use a  ``universal'' strategy that does not require such knowledge.
    \item While our protocols are polynomial-time, the polynomial powers are higher than ideal, and generally speaking we do not claim our protocols to be ``practical''.  It is therefore a natural direction to study protocols that have a smaller runtime and/or are specifically targeted at attaining strong experimental performance.
\end{enumerate}

\appendix

\subsection{Proof of Lemma \ref{lem:noncausal} (Non-causal setting)} \label{sec:noncausal}

\subsubsection{Converse part}

The first term in the converse is trivial by the definition of $E^{\rm src}_{\epsilon}$.  For the second term, we consider the case that $\theta^*$ is restricted to be an integer multiple of $\epsilon'$ for some $\epsilon' > \epsilon$.  Since $\theta^* \in [0,1]$, by letting $\epsilon'$ be sufficiently close to $\epsilon$, the number of such $\theta^*$ values is $M = \big\lceil \frac{1}{2\epsilon} \big\rceil$.  For example, if $\epsilon = 1/8$ then the points are approximately $\{0,1/4,1/2,3/4\}$ but with the gaps expanded very slightly, giving $M = 4$, but for slightly smaller $\epsilon$ this would increase to $M=5$ due to adding a fifth point near 1.

Observe that even if the teacher knows $\theta^*$ exactly, it is still required to send the student sufficient information to identify the value from the size-$M$ set.  This amounts to sending one of $M = \big\lceil 1/(2\epsilon)\big\rceil$ messages over the channel, and the associated exponent is $E^{\rm chan}_M = E^{\rm chan}_{\lceil 1/(2\epsilon)\rceil}$, thus yielding the desired second term.

\subsubsection{Achievability part}

For the achievability part, we fix $\delta \in (0,1)$ and consider the following protocol for the non-causal setting:
\begin{itemize}
    \item The teacher uses an optimal estimation procedure for direct observations to estimate $\theta^*$ to within accuracy $(1-\delta)\epsilon$; this estimate is denoted by $\tilde{\theta}$.
    \item The teacher rounds $\tilde{\theta}$ to the nearest point in a set $\calI$ of points in $[0,1]$ . 
    We design this set such that every $\theta \in [0,1]$ has distance at most $\delta \epsilon$ to the nearest point.  For instance, this can be achieved as follows:
    \begin{itemize}
        \item Start with the set $\calI = \{\delta\epsilon, 3\delta\epsilon, 5\delta\epsilon,\dotsc \} \cap [0,1]$;
        \item If $\theta = 1$ is not $\delta\epsilon$-close to the last point, then add $\theta=1$ itself to $\calI$.
    \end{itemize}
    With this construction, the size of $\calI$ is $\big\lceil\frac{1}{2\delta\epsilon}\rceil$; for example, if $\delta\epsilon = \frac{1}{4}$ then $\calI = \{1/4, 3/4\}$, but if $\epsilon$ is decreased slightly then we need to add a third point (namely, 1).  After rounding the initial estimate to $\calI$ (i.e., performing quantization), the corresponding index is sent to the student using an error-correcting code of length $n$ with $|\calI|$ messages.
    \item The student decodes the received sequence to obtain the corresponding point in $\calI$, and this forms the final estimate $\hat{\theta}_n$.
\end{itemize}
The teacher's own estimation incurs an error of up to $(1-\delta)\epsilon$, and the subsequent rounding incurs up to another $\delta\epsilon$ for a total of at most $\epsilon$.  Since the number of codewords in the codebook is $\big\lceil \frac{1}{2\delta\epsilon} \big\rceil$, the result readily follows.

\begin{remark}
    In the above analysis, we consider uniform quantization for simplicity, but improved quantization strategies may be possible, e.g., quantizing more finely in regions where estimation is known to be more difficult.  (In particular, see \eqref{eq:e_theta} for a relevant $\theta^*$-dependent bound in the Bernoulli case.)  The above strategy is only introduced as a baseline, rather than something that we seek to optimize carefully.
\end{remark}

\subsection{Proof of Lemma \ref{lem:simple_forward} (Achievability via simple forwarding)} \label{sec:simple}

Observe that under simple forwarding, the student has access to $n-1$ i.i.d.~observations drawn from ${\rm Ber}( \tilde{\theta}^* )$, where
\begin{eqnarray}
    \tilde{\theta}^* = \theta^* (1-p) + (1-\theta^*)p = \theta^* (1-2p) + p.
\end{eqnarray}
Accordingly, the student may apply any estimation procedure to form an estimate $\tilde{\theta}$ of $\tilde{\theta}^*$, and then form the final estimate by shifting and scaling:
\begin{equation}
    \hat{\theta} = \frac{\tilde{\theta} - p}{1-2p}.
\end{equation}
Clearly, if the estimate $\hat{\theta}$ is $\epsilon$-accurate if and only if the estimate $\tilde{\theta}$ is $\epsilon(1-2p)$-accurate, and the desired result then follows from the definition of $E^{\rm src}_{\epsilon}$ (the distinction between $n-1$ vs.~$n$ samples does not impact the error exponent).

\subsection{Proof of Lemma \ref{lem:est_forward} (Achievability via one-shot estimate-and-forward)} \label{sec:oneshot}

The analysis is identical to the achievability part in Appendix \ref{sec:noncausal}, except that the teacher's block length is reduced from $n$ to $(1-\lambda)n$, and the student's block length is reduced from $n$ to $\lambda n$.  The details are omitted to avoid repetition.

\subsection{Proof of Corollary \ref{cor:db_dmc} (Existence of good codebooks for general DMCs)} \label{sec:pf_db_dmc}

    Let $P_W$ be the distribution achieving the maximum in \eqref{eq:e0_q}. Consider random coding in which each symbol of each codeword is randomly drawn according to $P_W$.  Then, we can write $ \db(\vec{W},\vec{W}')$ as a sum of i.i.d.~variables: 
    \begin{equation}
        \db(\vec{W},\vec{W}',P_{Z|W}) = \sum_{i} \db(\vec{W}_i, \vec{W}'_i,P_{Z|W}), \label{eq:db_W_decomp}
    \end{equation}
    where the equality follows from the tensorization of $\db$ (Lemma \ref{lem:tens_db}). 
    The expected value is $k\cdot E^{\rm chan}(0)$, and we proceed to separately study the cases $E^{\rm chan}(0) < \infty$ and $E^{\rm chan}(0) = \infty$.
    
    When $E^{\rm chan}(0) < \infty$, the terms $\db(\vec{W}_i, \vec{W}'_i,P_{Z|W})$ in \eqref{eq:db_W_decomp} are all finite.  As a result, for any $c<1$, the probability that $\db(\vec{W}, \vec{W}',P_{Z|W}) < ck E^{\rm chan}(0)$ is exponentially small in $c$ by Hoeffding's inequality \cite[Sec.~2.6]{boucheron}.  Since there are polynomially many codewords, by the union bound, it holds with high probability that $\db(\vec{W},\vec{W}',P_{Z|W}) \geq (k-o(k)) E^{\rm chan}(0)$.

    In the case that $E^{\rm chan}(0) = \infty$, there must exist two symbols $w,w'$ with $\db(w,w',P_{Z|W}) = \infty$ and $P_W(w)P_W(w')>0$.  Moreover, we have $\db(\vec{W},\vec{W}') = \infty$ as long as there exists a single position $i \in \{1,\dotsc,n\}$ where one codeword has $w$ and the other has $w'$.  Since the codewords are i.i.d., this occurs with probability approaching 1 exponentially fast, and thus a similar argument to the case $E^{\rm chan}(0) < \infty$ applies.

\subsection{Proof of Lemma \ref{lem:simplification_Bern} (Simplification of exponent for the Bernoulli/BSC setting)} \label{app:pf_simplification_Bern}

We seek to show that $|\theta'-\theta^*|\geq 2\epsilon$ implies $\db(\theta^*, \theta') \geq D(\frac12 \| \frac12 + \epsilon)$.  Using the shorthand $x = \frac{\theta^*+\theta'}{2}$ and $y = \frac{\theta^*-\theta'}{2}$, we have
\begin{equation}
    \rho(\theta^*, \theta') = \sqrt{\theta^*\theta'} + \sqrt{(1-\theta^*)(1-\theta')} = \sqrt{(x+y)(x-y)} + \sqrt{(1-x-y)(1-x+y)}.
    \label{eq:rho_xy_expand}
\end{equation}
We will now show that \eqref{eq:rho_xy_expand} achieves its maximum at $x=\frac12$. Differentiating with respect to $x$, we have\footnote{It is straightforward to check from the definitions of $x$ and $y$ that the terms inside the square root in \eqref{eq:first_der} are never negative.}
\begin{equation}
    \frac{\partial}{\partial x} \Big( \sqrt{(x+y)(x-y)} + \sqrt{(1-x-y)(1-x+y)} \Big) = \frac{x}{\sqrt{x^2-y^2}} - \frac{(1-x)}{\sqrt{(1-x)^2 - y^2}}. \label{eq:first_der}
\end{equation}
It follows that when $x=\frac12$, the derivative is 0.

We now compute the second derivative:
\begin{equation}
    \frac{\partial^2}{\partial x^2} \Big(\sqrt{(x+y)(x-y)} + \sqrt{(1-x-y)(1-x+y)}\Big) = -\frac{y^2}{(x^2-y^2)^{3/2}} - \frac{y^2}{((1-x)^2-y^2)^{3/2}} < 0,\ \forall x,y \label{eq:second_der}
\end{equation}
so $x=\frac12$ is a global maximum. Therefore,
\begin{equation}
    \rho(\theta^*, \theta') \leq 2\sqrt{\Big(\frac12 + y\Big)\Big(\frac12 -y\Big)} \leq 2\sqrt{\frac{1}{4} - \epsilon^2} = \exp\bigg(-D\Big(\frac12 \,\Big\|\, \frac12 + \epsilon\Big)\bigg),
\end{equation}
where the middle step uses $\big(\frac12 + y\big)\big(\frac12 -y\big) = \frac{1}{4}-y^2$ along with $|y| \ge \epsilon$ (since $y = \frac{\theta^*-\theta'}{2}$ and $|\theta'-\theta^*|\geq 2\epsilon$), and the last step uses Lemma \ref{lem:eps_equiv}.  Taking the negative logarithm, we obtain
\begin{equation}
    \db(\theta^*, \theta') \geq D\Big(\frac12 \Big\| \frac12 + \epsilon\Big),
\end{equation}
as desired. 

% and substituting this into \eqref{eq:e_theta} gives
% \begin{equation}
%     E \geq \min\bigg( D\Big(\frac12 \big\| \frac12 + \epsilon\Big), \frac12 D\Big(\frac12 \big\| p\Big)\bigg).
% \end{equation}

\bibliographystyle{IEEEtran}
\bibliography{general}
\end{document}